\documentclass[a4paper,USenglish]{lipics-v2021}
\usepackage[utf8]{inputenc}
\usepackage{cite}

\long\def\commentbegin #1\commentend{}

\usepackage{balance}
\usepackage{algorithm}
\usepackage[noend]{algpseudocode}
\usepackage{ifthen}
\usepackage{comment}
\usepackage{amsthm,amsmath}
\usepackage{graphicx}

\usepackage{pstricks,pst-node,pst-tree}

\usepackage{color}
\usepackage{mathrsfs}

\renewcommand{\epsilon}{\varepsilon}

\hideLIPIcs

\newcommand{\Prob}[1]{\hbox{\rm I\kern-2pt P}\left[#1\right]}

\DeclareMathAlphabet{\mathsc}{OT1}{cmr}{m}{sc}

\renewcommand{\geq}{\geqslant}
\renewcommand{\ge}{\geqslant}
\renewcommand{\leq}{\leqslant}

\newcommand{\REFLOW}{{\mathcal{R}_{\ell}}}
\newcommand{\REFHIGH}{{\mathcal{R}_{h}}}

\newboolean{short}
\setboolean{short}{false}

\algblockdefx[ExecBl]{BlockOn}{BlockOff}  [1]{#1}
  
\makeatletter
\ifthenelse{\equal{\ALG@noend}{t}}%
  {\algtext*{BlockOff}}
  {}%
\makeatother

\floatname{algorithm}{Procedure}

\newcommand{\shortOnly}[1]{\ifthenelse{\boolean{short}}{#1}{}}
\newcommand{\onlyShort}[1]{\ifthenelse{\boolean{short}}{}{#1}}
\newcommand{\longOnly}[1]{\ifthenelse{\boolean{short}}{}{#1}}
\newcommand{\onlyLong}[1]{\ifthenelse{\boolean{short}}{}{#1}}

\def\ShowComment{True}
\ifdefined\ShowComment
\def\billy#1{{\color{green}\underline{\textsf{Billy:}}} {\color{blue} \emph{#1}}}
\def\gopal#1{{\color{red}\underline{\textsf{Gopal:}}} {\color{blue} \emph{#1}}}
\def\david#1{{\color{orange}\underline{\bf{David:}}} {\color{blue} \hl{#1}}}
\def\shay#1{\hl{shay: #1}}
\else
\def\billy#1{}
\def\gopal#1{}
\def\david#1{}
\def\shay#1{}
\fi


\def\paragraphbf#1{\par\vskip 7pt\noindent\textbf{#1}\hskip 10pt}
\def\inline#1:{\par\vskip 7pt\noindent{\bf #1:}\hskip 10pt}



\def\candidateSTATE{\textsf{Candidate}}
\def\electedSTATE{\textsf{Elected}}
\def\LEADER{\textsc{leader}}
\def\nonelectedSTATE{\textsf{Non-elected}}

\def\RefNonselectedSTATE{\textsf{Non-selected}}
\def\RefReadySTATE{\textsf{Ready}}
\def\RefChosenSelectedSTATE{\textsf{Chosen-Selected}}
\def\RefInDisputeSTATE{\textsf{In-Dispute}}


\def\Chosen{\textsf{CHOSEN}}
\def\Contender{\textsf{CONTENDER}}
\def\CANDSTATE{\textsf{CAND-STATE}}
\def\REFSTATE{\textsf{REF-STATE}}
\def\MLIST{\textsf{M-List}}
\def\NUMREPLIES{\textsf{NUM-REPLIES}}
\def\SENDLIST{\textsf{Send-List}}

\def\polylog{\operatorname{polylog}}


\def\REQUEST{\textsc{request}}
\def\APPROVED{\textsc{approved}}

\def\DECLINED{\textsc{declined}}
\def\LOSES{\textsc{loses}}
\def\DECIDE{\textsc{dispute}}

\def\WAKEUP{\textsc{wakeup}}


\def\OnReceiveMessage{\textbf{On\_Receive\_Message}}
\def\Initialize{\textbf{Initialize}}
\def\Candidate{\textbf{Candidate}}
\def\CandidateDecideResponse{\textbf{Candidate\_Dispute\_Response}}
\def\Referee{\textbf{Referee}}
\def\RefereeRequestResponse{\textbf{Referee\_Request\_Response}}
\def\RefereeDisputeResponse{\textbf{Referee\_Dispute\_Reply\_Response}}

\def\SendMessage{\textbf{Send\_Message}}

\def\ID{RANK}




\renewcommand{\geq}{\geqslant}
\renewcommand{\ge}{\geqslant}
\renewcommand{\leq}{\leqslant}

\long\def\hide #1\hideend{}

\newcommand{\squishlist}{
 \begin{list}{$\bullet$}
  { \setlength{\itemsep}{0pt}
     \setlength{\parsep}{3pt}
     \setlength{\topsep}{3pt}
     \setlength{\partopsep}{0pt}
     \setlength{\leftmargin}{1.5em}
     \setlength{\labelwidth}{1em}
     \setlength{\labelsep}{0.5em} } }

\newcommand{\squishlisttwo}{
 \begin{list}{$\bullet$}
  { \setlength{\itemsep}{0pt}
     \setlength{\parsep}{0pt}
    \setlength{\topsep}{0pt}
    \setlength{\partopsep}{0pt}
    \setlength{\leftmargin}{2em}
    \setlength{\labelwidth}{1.5em}
    \setlength{\labelsep}{0.5em} } }

\newcommand{\squishend}{
  \end{list}  }


\title{Singularly Near Optimal Leader Election in Asynchronous Networks}

\titlerunning{Singularly Near Optimal Leader Election in Asynchronous Networks}

\author{Shay Kutten}{Faculty of Industrial Engineering and Management, Technion - Israel Institute of Technology, Haifa, Israel}{kutten@technion.ac.il}{0000-0003-2062-6855}{This work was supported in part by the Bi-national Science Foundation (BSF) grant 2016419.}

\author{William K. {Moses Jr.}}{Department of Computer Science, University of Houston, Houston, TX, USA}{wkmjr3@gmail.com}{0000-0002-4533-7593}{Part of this work was done while the author was a postdoc at the Technion - Israel Institute of Technology in Israel. This work was supported in part by a Technion fellowship and in part by NSF grants, CCF1540512, IIS-1633720, CCF-1717075, and BSF grant 2016419.}

\author{Gopal Pandurangan}{Department of Computer Science, University of Houston, Houston, TX, USA}{gopal@cs.uh.edu}{0000-0001-5833-6592}{This work was supported in part by NSF grants CCF-1717075, CCF-1540512, IIS-1633720, and BSF grant 2016419.}

\author{David Peleg}{Department of Computer Science and Applied Mathematics, Weizmann Institute of Science, Rehovot, Israel}{david.peleg@weizmann.ac.il}{0000-0003-1590-0506}{This work was supported in part by the US-Israel Binational Science Foundation grant 2016732.}

\authorrunning{S. Kutten, W.\,K. Moses Jr., G. Pandurangan, and D. Peleg}
\Copyright{Shay Kutten, William K. Moses Jr., Gopal Pandurangan, and David Peleg}

\begin{CCSXML}
<ccs2012>
   <concept>
       <concept_id>10003752.10003809.10010172</concept_id>
       <concept_desc>Theory of computation~Distributed algorithms</concept_desc>
       <concept_significance>500</concept_significance>
       </concept>
    <concept>
       <concept_id>10002950.10003648.10003671</concept_id>
       <concept_desc>Mathematics of computing~Probabilistic algorithms</concept_desc>
       <concept_significance>500</concept_significance>
       </concept>
   <concept>
       <concept_id>10002950.10003624</concept_id>
       <concept_desc>Mathematics of computing~Discrete mathematics</concept_desc>
       <concept_significance>300</concept_significance>
       </concept>
 </ccs2012>
\end{CCSXML}

\ccsdesc[500]{Theory of computation~Distributed algorithms}
\ccsdesc[500]{Mathematics of computing~Probabilistic algorithms}
\ccsdesc[300]{Mathematics of computing~Discrete mathematics}

\keywords{Leader election, Singular optimality, Randomized algorithms, Asynchronous networks, Arbitrary graphs}


\EventEditors{Seth Gilbert}
\EventNoEds{1}
\EventLongTitle{35th International Symposium on Distributed Computing (DISC 2021)}
\EventShortTitle{DISC 2021}
\EventAcronym{DISC}
\EventYear{2021}
\EventDate{October 4--8, 2021}
\EventLocation{Freiburg, Germany (Virtual Conference)}
\EventLogo{}
\SeriesVolume{209}
\ArticleNo{13}

\nolinenumbers

\begin{document}
\maketitle
\begin{abstract}
This paper concerns designing  distributed algorithms that are \emph{singularly optimal}, i.e., algorithms that are \emph{simultaneously}  time and message \emph{optimal}, for  the  fundamental  leader election problem in \emph{asynchronous}  networks.  

Kutten et al. (JACM 2015) presented a singularly near optimal randomized leader election algorithm for general \emph{synchronous} networks that ran in $O(D)$ time and used $O(m \log n)$
messages (where $D$, $m$, and $n$ are the network's diameter, number of edges and number of nodes, respectively) with high probability.\footnote{Throughout, ``with high probability'' means  ``with probability at least $1-1/n^c$, for constant $c$.''} Both  bounds are near optimal (up to a logarithmic factor), since $\Omega(D)$ and $\Omega(m)$ are the respective lower bounds for time and messages for leader election even for synchronous networks and even for (Monte-Carlo) randomized algorithms.
On the other hand, for general asynchronous networks, leader election algorithms are only known that are either time or message optimal, but not both. Kutten et al. (DISC 2020) presented a randomized  asynchronous leader election algorithm that is singularly near optimal for \emph{complete networks}, but left open the problem for general networks.

This paper shows that singularly near optimal (up to polylogarithmic factors) bounds can be achieved for general \emph{asynchronous} networks. 
We present a randomized singularly near optimal leader election algorithm that runs in $O(D  + \log^2n)$ time and $O(m\log^2 n)$ messages with high probability.  
Our result is the first known distributed leader election algorithm for asynchronous networks that is near optimal with respect to both time and message complexity and  improves over a long line of results including the classical results of Gallager et al. (ACM TOPLAS, 1983), Peleg (JPDC, 1989), and Awerbuch (STOC 89).

\end{abstract}


\section{Introduction}
\label{sec:intro}

\paragraphbf{Background and motivation.}
Trade-offs between resource bounds (typically time and space) form a major subject of study in classical theory of computation. In distributed computing, it is common to focus on two fundamental measures, the time and message complexity of a distributed network algorithm, and trade-offs between time and 
communication have been well studied. See, e.g., \cite{A89,awerbuch1985distributed,awerbuch1985complexity,AG91} for early trade-offs. 
A question that arose
more recently
regarding various distributed problems is 
whether the problem admits an algorithm that is optimal 
in time and communication simultaneously. In \cite{stoc17,gmyr}, such algorithms are called \emph{singularly optimal} (or \emph{singularly ``near optimal''} for algorithms whose complexity is polylogarithmically worse than optimal). 
Such algorithms have been shown in recent years for
minimum spanning tree,
 (approximate) shortest paths, leader election, and several other 
 problems \cite{KPPRT15jacm,stoc17,elkin2017simple,haeupler2018round}.
 
All the above results were shown for \emph{synchronous} networks, and \emph{do not} apply to asynchronous networks. 
In particular, singularly near optimal randomized \emph{synchronous} leader election algorithms were presented in \cite{KPPRT15jacm}.
These algorithms require $O(D)$ time and use $O(m \log n)$ messages with high probability (where $D$, $n$ and $m$ are the network's diameter, number of nodes and number of edges, respectively). Singular near optimality follows from the fact
that $\Omega(D)$ and $\Omega(m)$ are lower bounds for time and messages for leader election even for synchronous networks and even for randomized algorithms \cite{KPPRT15jacm}.
These algorithms inherently rely on the synchronous communication, so an attempt to convert them to asynchronous networks would probably 
incur heavy cost overheads
(see the discussion of synchronizers below).
In fact, the question whether similar bounds can be achieved for general \emph{asynchronous} networks was left open, although one can obtain algorithms that are separately time optimal \cite{peleg-jpdc} or message optimal \cite{GallagerHS1983}.

A singularly near optimal randomized \emph{asynchronous} leader election algorithm was recently presented in \cite{disc2020}, but only for \emph{complete} networks. It requires $O(n)$ messages and $O(\log^2 n)$ time, which is singularly optimal (up to logarithmic factors) since $\Omega(n)$ and $\Omega(1)$ are the respective message and time lower bounds for leader election in complete $n$-node networks. 
The question whether 
leader election in general 
networks
admits an \emph{asynchronous} 
singularly 
near optimal 
algorithm or 
exhibits an inherent \emph{time-messages trade-off} was again left as an
open problem. The algorithm in \cite{disc2020} heavily utilizes the special nature of the complete graph, so designing an asynchronous algorithm for general graphs requires additional tools 
and insights. 

Leader election is 
a central and intensively studied problem
in distributed computing.  It captures the pivotal notion of symmetry breaking in 
contexts involving the entire
network (``global algorithms''). Given a leader, many other problems become trivial. 
Consequently,
a singularly optimal leader election algorithm 
may
ease the design of a singularly optimal algorithms for many other tasks. In addition to its theoretical importance, leader election is used in multiple practical contexts. The literature is too rich to cover here, but see, for example \cite{Lann77-DistSystems, GallagerHS1983,AG91,KPPRT15jacm, Lyn96,peleg-jpdc,santoro-book,GerardTelDistributedAlgosBook,chandra2007paxos,ghemawat2003google,chang2008bigtable}.  

In our 
setting,
an \emph{arbitrary} subset of nodes can \emph{wake up spontaneously at arbitrary times} and start the election algorithm by sending messages over the network. The algorithm should terminate with a \emph{unique} node $v$ being elected as leader 
(where initially, all the nodes are in the same state, ``not leader'')
and the leader's identity should be \emph{known to all nodes}.
Our goal in this paper is to design  leader election algorithms in distributed \emph{asynchronous} networks that  are singularly near optimal.

\paragraphbf{Using synchronizers.}
One can convert a synchronous algorithm to work on an asynchronous network  using a standard tool known as  a \emph{synchronizer} \cite{awerbuch1985complexity};
however, such a conversion typically increases substantially either the time or the message complexity or both. Moreover, there is usually a 
non-negligible
cost associated with \emph{constructing}
such a synchronizer in the first place. For example, applying the simple $\alpha$ synchronizer (which does not require the a priori existence of a leader or a spanning tree)
to the singularly optimal synchronous leader election algorithm of \cite{KPPRT15jacm} yields an asynchronous algorithm with message complexity of $O(mD\log n)$ and  time complexity of $O(D)$; this algorithm is not message optimal, especially for large diameter networks.  Indeed, many prior works (see e.g., \cite{AP90}), do construct efficient synchronizers that can achieve optimal conversion from synchronous to asynchronous algorithms with respect to both time and messages, but  constructing the synchronizer itself requires a substantial preprocessing or initialization  cost. For example, the message cost of the synchronizer protocol of \cite{AP90} can be as high as $O(mn)$. Moreover, several synchronizer protocols, such as $\beta$ and $\gamma$ of \cite{awerbuch1985complexity} and that of \cite{AP90}, require the existence of a \emph{leader} or \emph{a spanning  tree};
hence these synchronizers are not useful for designing leader election algorithms. 
For these reasons,
designing singularly optimal algorithms is more
challenging for asynchronous networks than for synchronous ones and requires new approaches.

\paragraphbf{Distributed Computing Model.}
\label{sec:model}
We model a distributed network as an arbitrary undirected connected graph $G=(V,E)$, $|V|=n$, $|E| =m$, similar to the standard model of~\cite{GallagerHS1983,AG91,KorachKuttenMoran-ModularLE-TOPLAS,A89},
except that, in addition,  processors can access \emph{private unbiased coins}.
Nodes have only knowledge of themselves and do not have any knowledge of their neighbors and their identities (if any). This is the 
commonly used
\emph{clean network} or $KT_0$ model
(see e.g., \cite{vainish}).

Our algorithm does not require that nodes have unique identities; however, it requires
that nodes have knowledge of $n$, the network size, or at least a constant factor
approximation of $n$.  We note that several prior algorithms for leader election require knowledge of
$n$ \cite{A89,SS94,AM94}. 
If nodes have unique identifiers, we assume that they are of size $O(\log n)$ bits.

We assume the standard \emph{asynchronous} $\mathcal{CONGEST}$ communication model~\cite{peleg-locality}, where messages (each message is of $O(\log n)$ bits) sent over an edge incur unpredictable but finite delays, in an error-free and FIFO manner (i.e., messages will arrive in sequence).
However, for the sake of time analysis, it is assumed that message
takes \emph{at most one time unit} to be delivered across an edge.
As is usual, we assume that local computation within a node is instantaneous and free; however, our algorithm will involve  only lightweight local computations.

We follow the standard timing and wake-up assumptions used in prior asynchronous protocols (see \cite{AG91,GallagerHS1983,singh97}).  Nodes are initially asleep, and a node enters the execution when it is woken up by 
the environment 
or upon receiving messages from awakened neighbors. 
As usual, uncertainties in the environment can be modeled  by means of an \emph{adversary} that controls some of the execution parameters. Specifically, we assume an \emph{adversarial wake up} model, where node wake-up times are scheduled by an adversary (who may decide to keep some nodes dormant). 
The time complexity is measured from the moment the first node wakes up.
A node can also be woken up by receiving messages from other nodes. 
In addition to the wake-up schedule, the adversary also decides 
the time delay of each message.
These decisions are done \emph{adaptively}, i.e., when the adversary makes a decision to wake up a node or delay a message, it has access to the results of all previous coin flips. 
The above adversarial wakeup model  should be contrasted with the weaker \emph{simultaneous wake up} model, where all nodes are assumed to be awake at the beginning of computation; simultaneous wake up is typically assumed in the design of synchronous protocols (see e.g., \cite{AG91,KPPRT15tcs,KPPRT15jacm}).
In the asynchronous setting, once a node enters execution, it performs all the computations required of it by the algorithm, and sends out messages to neighbors as specified by the algorithm.

\paragraphbf{Our  Contribution.}
The main 
question addressed in this paper is whether singularly optimal bounds for leader election can be achieved for general \emph{asynchronous} networks. We answer this question in the affirmative and present 
    a randomized singularly near optimal leader election algorithm that elects a leader with high probability, runs in $O(D + \log^2n)$ time with high probability, and has message complexity $O(m\log^2 n)$ with high probability, where high probability is probability at least $1-1/n^c$, for constant $c$. Our algorithm even works in anonymous networks.
 To the best of our knowledge, this  is the first known distributed leader election algorithm for asynchronous general networks that is near optimal with respect to both time and message complexity and improves over a long line of results (see Table \ref{table:upper-bound-results})  including the classical results of Gallager et al.~\cite{GallagerHS1983}, Peleg~\cite{peleg-jpdc}, and Awerbuch~\cite{A89}. 
We refer to Table \ref{table:upper-bound-results} 
for a comparison of message and time complexity bounds of  leader election algorithms in asynchronous networks. It should be noted that
none of the prior results achieve time and message bounds that are simultaneously close to optimal bounds (even within a $O(\polylog n)$ factor) of $\Theta(D)$ (time) and $\Theta(m)$ (messages) respectively. We note our bounds
are almost as good as those obtained for the synchronous model: the work of Kutten et al.~\cite{KPPRT15jacm} presented a
$O(m \log n)$ messages (with high probability) and a $O(D)$ algorithm.
It is open whether one can design a (tight) singularly optimal leader election algorithm
that uses $O(m)$ messages and $O(D)$ time even for the \emph{synchronous} setting.

The importance of having a singularly optimal (or near optimal) leader election is that it can serve as a basic building block in designing singularly (or near) optimal
asynchronous algorithms for other fundamental problems such as MST and shortest paths. Currently, we are not aware of such algorithms for these problems
in \emph{asynchronous} networks (unlike synchronous networks~\cite{elkin2017simple,haeupler2018round,stoc17}).

\begin{table*}[ht]
	\caption{\small Comparison of leader election algorithms for a general graph with $n$ nodes, $m$ edges, and $D$ diameter in asynchronous systems along with our contributions. Note that if the algorithm was deterministic, then it was required that nodes have unique IDs. Randomized solutions may not have such a requirement. 
	} 
	\centering 
	\begin{tabular}{|c|c|c|c|}
		\hline
		Paper & Message Complexity & Time Complexity   & Type of Solution \\
		\hline
		\hline
		Gallager et al.~\cite{GallagerHS1983} & $O(m + n \log n)$ & $O(n \log n)$ & Deterministic\\
		\hline
		Lavall{\'e}e and Lavault~\cite{LL90}* & $O(m+ n \log n)$ & $O(n \log \log (n/\epsilon))$ & Randomized \\
		\hline
		Chin and Ting~\cite{chin1990improving} & $O(m + n \log n)$ & $O(n \log^* n)$ & Deterministic\\
		\hline
		Gafni~\cite{gafni-election} & $O(m + n \log n)$ & $O(n \log^* n)$ & Non-deterministic\\
		\hline
		Awerbuch~\cite{awerbuch-optimal}, Faloutsos and Molle~\cite{faloutsos}** & $O(m + n \log n)$ & $O(n)$  & Deterministic\\
		\hline
		Schieber and Snir~\cite{SS94} & $O(m + n \log n)$ & $O(n)$ & Randomized\\
		\hline
		Afek and Matias~\cite{AM94} & $O(m)$ & $O(n)$ & Randomized\\
		\hline
		Peleg~\cite{peleg-jpdc} & $O(mD)$ & $O(D)$ & Deterministic\\
		\hline
		Awerbuch~\cite{A89}*** & $O(m^{1+\epsilon})$ & $O(D^{1+\epsilon})$  & Deterministic\\
		\hline
		This paper & $O(m \log^2 n)$ & $O(D + \log^2 n)$ & Randomized\\
		\hline
		\multicolumn{4}{|l|}{*They claim a \textit{virtual} running time of $O(D' \log \log (n/\epsilon))$, corresponding to an algorithmically}\\
		\multicolumn{4}{|l|}{\hspace{1em}  constructed subgraph of diameter $D'$ of the initial network. $D'$ may be as large as $n$. $0 < \epsilon < 1$.}\\
		\hline
		\multicolumn{4}{|l|}{**The algorithm of~\cite{faloutsos} is a corrected version of the one in~\cite{awerbuch-optimal}.}\\
		\hline
		\multicolumn{4}{|l|}{***Here, $\epsilon$ can be any value $>0$.}\\
		\hline
	\end{tabular}
	\label{table:upper-bound-results}
\end{table*}

Our algorithm makes use of several elementary techniques, combined in a suitable way.
In particular, it exploits the idea of using groups of referees as \emph{quorums} 
in order to ensure mutual exclusion, an idea utilized in several papers, e.g.~\cite{CPR19,GRS18,disc2020,KPPRT15tcs,KrishnaRamanathan:randomized}. Traditionally, verifying that an entire quorum has been secured is achieved by counting the number of supporting referees. It should be noted, though, that such a counting process is problematic under the asynchronous communication model. This difficulty requires us to introduce some \emph{slack} to the quorum sizes, and rely on it in order to ensure quorum intersection with high probability.
Another interesting feature of this algorithm concerns the way it manages communication. Message transmissions are performed using \emph{flood-based broadcasts}, even when the message is targeted at a \emph{single} recipient (unlike most previous algorithms, where such messages are sent by unicast along a specific path). This is done since in an asynchronous network, paths defined by previous broadcasts might not be shortest, so using them later might prevent us from attaining a near diameter time. The obvious down side is that using broadcasts for transmitting individual messages is expensive in communication. Hence, one delicate technical point is how to maintain a tight cap on the overall number of wasteful broadcasts, in order to save on messages and on congestion. The key idea is to ensure that the number of active ``speakers'' (as opposed to passive ``relays'' who merely forward messages) during the entire execution is kept small (logarithmic in the network size).

We first
compare the technical contribution of this work 
to the 
known singularly near optimal \emph{synchronous} algorithm of Kutten et al.~\cite{KPPRT15jacm} for \emph{general graphs}. The task there is significantly easier since nodes know when to terminate: once a node stops receiving an echo, exactly one node (the one with the highest random rank) will know it is the leader. Moreover, in the synchronous setting, this happens after $O(D)$ rounds. In the asynchronous setting, this is not possible (unless some heavy message overhead is added, e.g., by using a synchronizer). This leads to technical challenges that we overcome utilizing randomization, broadcasts, and quorums. Randomization is used not only to reduce the number of messages (similar to the synchronous case), but more importantly to implement the quorums.

We also 
compare the technical contribution of this work 
to the 
known singularly near optimal \emph{asynchronous} algorithm of Kutten et al.~\cite{disc2020} for \emph{complete graphs}, 
which 
uses similar techniques. One key change
concerns the way candidates communicate with referees.
In ~\cite{disc2020}, this is done by sending messages directly, which 
is doable in complete graphs. In contrast,  
in the current algorithm the candidates must rely on broadcasts to send messages to referees. 
This change raises additional challenges and necessitates a major modification to the algorithm of~\cite{disc2020}.
Specifically, in that algorithm, 
each candidate selects, and hence knows, its referees in each phase 
(this is doable because the graph is complete). 
In contrast, in our algorithm for an arbitrary graph, a candidate does not know the referees (because referees are chosen independently of the candidate). As a result, it is necessary to keep an accurate estimate of the number of referees. 
We rely on this estimate to ensure 
the existence of exactly 
one leader, with high probability 
(if the estimate is too low, multiple nodes might become leaders; if it is too high, no one will become a leader).

Moreover, it should be stressed that the technique used in~\cite{disc2020} for saving on messages cannot be used here. 
Therein, 
candidates compete with each other in \emph{phases}, and get eliminated gradually, until a single candidate remains. Here, there does not appear to be a way for the candidate to save by sending information to only specific referees in a message optimal manner. 
Specifically, applying the algorithm of~\cite{disc2020} on a general graph 
by replacing direct communication with broadcasts (and making no other changes) would result in a high communication cost of $\Omega(mn)$ messages and possibly a higher run time (due to congestion), compared to the performance of the current algorithm.

\paragraphbf{Related Work.}
Leader election has been very well-studied in distributed networks for many decades.
Le~Lann~\cite{Lann77-DistSystems} first studied the problem in a 
ring network and  the seminal paper of Gallager, Humblet, and 
Spira~\cite{GallagerHS1983} studied it in
general graphs. Since then, various algorithms and lower bounds are known in different models with synchronous/asynchronous communication and in networks of varying topologies, such as cycles, complete graphs, or arbitrary graphs. See, e.g.,
\cite{KhanKMPT08,KorachKuttenMoran-ModularLE-TOPLAS,KPPRT15jacm,KPPRT15tcs, Lyn96,peleg-jpdc,santoro-book,GerardTelDistributedAlgosBook} 
and the references therein. 

Prior to this paper, there were no known singularly near optimal algorithms for \emph{general asynchronous} networks, i.e., algorithms that take $\tilde{O}(m)$ messages and $\tilde{O}(D)$ time.\footnote{$\tilde{O}$ notation hides a $\polylog(n)$ factor.}\footnote{There was, however, work done on a ring where Itai and Rodeh~\cite{itai-rodeh} presented an algorithm with $O(n \log n)$ bit complexity on expectation. Further analysis shows that the message complexity is $O(n \log n)$ on expectation and the time complexity is $O(n \log n)$ on expectation.} 
Gallager, Humblet, and Spira~\cite{GallagerHS1983} presented a minimum weight spanning tree (MST) algorithm (also applicable for leader election) with message complexity $O(m+n \log n)$ and time complexity $O(n \log n)$; this is (essentially)
message optimal~\cite{KPPRT15jacm} but not time optimal. 
Hence, further research concentrated on improving the time complexity of MST algorithms. The time complexity
was first improved to $O(n \log \log (n/\epsilon))$, $0 < \epsilon < 1$, by Lavall{\'e}e and Lavault~\cite{LL90}, then to $O(n \log^* n)$ independently by Chin and Ting~\cite{chin1990improving} and Gafni~\cite{gafni-election},
and finally to $O(n)$ by Awerbuch~\cite{awerbuch-optimal} (see also~\cite{faloutsos}). 
Thus using MST algorithms for leader election in asynchronous networks does not yield the best possible time complexity of $O(D)$.
Peleg's leader election algorithm \cite{peleg-jpdc}, on the other hand, takes  $O(D)$ time and uses $O(mD)$ messages; this is time optimal, but not message optimal.

Korach et al.~\cite{KorachPODC1984}, Humblet \cite{humblet-clique}, Peterson \cite{Pet84} and Afek and Gafni~\cite{AG91} presented $O(n \log n)$ message algorithms for \emph{asynchronous complete} networks. Korach, Kutten, and  Moran~\cite{KorachKuttenMoran-ModularLE-TOPLAS} presented a general method plus applications to various classes of graphs.  

For \emph{anonymous} networks under some reasonable assumptions, deterministic leader election was shown to be impossible, using symmetry arguments~\cite{AngluinSTOC80}. Randomization comes to the rescue in this case; random rank assignment is often used to assign unique identifiers, as done herein. Randomization also allows us to beat the lower bounds for deterministic algorithms, albeit at the risk of a small chance of error. As a starting step, Schieber and Snir~\cite{SS94} developed a randomized algorithm for leader election in anonymous asynchronous networks that took $O(m + n \log n)$ messages and $O(n)$ time. Afek and Matias~\cite{AM94} presented a randomized algorithm that took the same time but improved the message complexity to $O(m)$, which is message optimal.\footnote{Note that~\cite{AM94}'s message complexity is $O(m)$ when the solution succeeds with constant probability. For a solution that succeeds with high probability, the message complexity grows to $O(mn \log^2 n)$.}

For synchronous networks, Pandurangan et al.~\cite{stoc17} and Elkin~\cite{elkin2017simple} have presented singularly near optimal distributed algorithms for MST.
Note that
optimal time for MST means $\Omega(D+\sqrt{n})$ \cite{peleg1999near}, while for leader election in asynchronous networks, optimal time is $O(D)$, as shown in \cite{peleg-jpdc} but using worse message complexity.

We note that the singularly optimal algorithm of this paper (for asynchronous networks) as well as those of \cite{stoc17} and \cite{elkin2017simple} (for synchronous networks)
assume the so-called \emph{clean network model}, a.k.a.\ $KT_0$~\cite{peleg-locality} (see Section \ref{sec:model}), where nodes
do not have initial knowledge of the identity of their neighbors.
But the above optimal  results do not in general apply to the $KT_1$  model, where nodes have initial knowledge
of the identities of their neighbors.
Clearly, the distinction between $KT_0$ and $KT_1$ has  no bearing on the asymptotic bounds for the time complexity, 
but it is significant when considering message complexity.
Awerbuch et al.~\cite{vainish} show that $\Omega(m)$ is a message lower bound for broadcast (and hence for leader election and MST) in the $KT_1$ model,  
if one allows only (possibly randomized Monte Carlo) comparison-based algorithms, i.e., algorithms that can operate on IDs only by comparing them.
(We note that all  algorithms mentioned earlier are comparison-based, including ours.)  
Hence, the result of \cite{vainish} implies that our 
leader election algorithm (which is comparison-based and randomized) is \emph{time} and \emph{message near optimal}
in the $KT_1$ model if one considers comparison-based algorithms only.

On the other hand, for \emph{randomized non-comparison-based} algorithms,  the message lower bound
of $\Omega(m)$ does not apply in the $KT_1$ model. 
King et al.~\cite{KingKT15} presented a randomized, non-comparison-based Monte
Carlo algorithm in the $KT_1$ model for spanning tree, MST (and hence leader election) in $\tilde{O}(n)$ messages ($\Omega(n)$ is a message lower bound) and in $\tilde{O}(n)$ time (see also ~\cite{MashreghiK17}).
 While this algorithm shows that one can achieve $o(m)$ message complexity
(when $m = \omega(n \polylog n)$), it is \emph{not} time-optimal; the time complexity
for spanning tree and leader election is $\tilde{O}(n)$.
The works of \cite{GhaffariK18,gmyr} showed bounds with 
with improved round complexity, but with worse bounds on the message complexity  and more generally, trade-offs between time and messages \cite{gmyr}. We note that all these
results are for \emph{synchronous} networks. For \emph{asynchronous} networks, Mashreghi and King~\cite{KMDISC19,MK21} presented a spanning tree algorithm (also applies for leader election and MST) that
takes $\tilde{O}(n^{1.5})$ messages and $\tilde{O}(n)$ time. 
It is an open question whether one can design a randomized (non-comparison based) algorithm that takes $\tilde{O}(n)$ messages and $\tilde{O}(D)$ rounds in the $KT_1$ model; these
are the optimal bounds possible in the $KT_1$ model.

\section{A Singularly Optimal Asynchronous Leader Election Algorithm}
\label{sec:alg-analysis}
In this section, we present a leader election algorithm in asynchronous networks that is essentially optimal with respect to both message and time
complexity. We assume that all nodes have knowledge of $n$, the number of nodes in the network.

\subsection{Algorithm}

\noindent \textbf{Brief Overview.} The process is initiated by one or more nodes, which are woken up by the adversary, 
possibly at different times (see earlier discussion of the adversarial wakeup model in Section~\ref{sec:intro}).
These nodes wake up the rest of the nodes.
Subsequently, each awake node decides whether it will participate in the 
algorithm in the role of (i) a candidate and/or (ii) a referee.
To do that, the node chooses randomly (for each role separately), 
with probability $O(\log n/ n)$, whether to take on the role or not.\footnote{It is possible for a node to participate in both roles or not participate in either.} Each {\em candidate} attempts to become the leader by winning over a sufficient number of referees. {\em Referees} are used to decide which candidate will go on to become the leader, essentially preferring a stronger candidate
(one who randomly chose a higher rank) over a weaker (lower rank) one, provided the stronger candidate does not ``show up too late'' (namely, after the weaker candidate has already accumulated sufficiently many referees to declare itself leader).
Once a candidate becomes a leader, it broadcasts the message that it is leader and that all nodes should terminate.

\noindent \textbf{Detailed Description.}
Each awake node $u$ maintains the following information: (i) a rank $\ID_u$, chosen uniformly at random from $[1,n^4]$ (we assume that all nodes have knowledge of $n$, the network size), (ii) two indicator variables $\CANDSTATE$ and $\REFSTATE$ reflecting the status of its candidacy for leadership and its role as a referee, respectively, and (iii) the set $\MLIST$ of all messages $u$ has heard over the course of the algorithm.\footnote{Actually, our algorithm can be extended if nodes have a knowledge of $n$ that is within some (known) constant factor. We discuss this 
at the end of Section~\ref{subsec:analysis}.} The variable $\CANDSTATE$ can take one of three values, $\candidateSTATE$, $\nonelectedSTATE$, and $\electedSTATE$, denoting whether the node is a candidate for leadership, it irrevocably committed itself to not be a leader, or it irrevocably committed itself to be a leader, respectively. 
The variable $\REFSTATE$ can take one of four values $\RefNonselectedSTATE$, $\RefReadySTATE$, $\RefChosenSelectedSTATE$, and $\RefInDisputeSTATE$, with each of the states described in detail later on.

Initially, a node $u$ is asleep and may be awoken by either  
the adversary 
or a $\langle \WAKEUP \rangle$ message that originated at another node.
(For the sake of uniformity, it is convenient to think of the 
the adversary waking up a node 
as a $\WAKEUP$ message arriving from outside the system, and treat both events in the same way.)
Once awoken, $u$ calls Procedure $\Initialize$, which first wakes up adjacent nodes by sending the message $\langle \WAKEUP \rangle$ to $u$'s neighbors.\footnote{Here, and in other places, we say that a node $u$ sends messages only to its neighbors. This is the local description of the algorithm. Globally, this results in the message being broadcast throughout the graph.} Subsequently, $u$ chooses randomly whether or not to be a candidate by flipping a biased coin with probability $1000 \log n/n$: if successful, $u$ initializes $\CANDSTATE$ to $\candidateSTATE$, chooses a rank at random - an integer in $[1,n^4]$, and broadcasts a request carrying its rank.\footnote{The theorems in this paper hold for all sufficiently large $n$, say $n\ge n_0$, where the value of $n_0$ depends on the probability used for the coin flip. The particular value of $1000 \log n/n$ was chosen for ease of exposition, but it can be modified in order to yield a smaller $n_0$. No attempt was made to optimize this value.}  
If not successful, $u$  sets $\CANDSTATE$ to $\nonelectedSTATE$. Node $u$ also chooses randomly whether or not to be a referee, with probability $1000 \log n/n$: if successful, $u$ initializes $\REFSTATE$ to $\RefReadySTATE$ and sets the variables $\Contender$ and $\Chosen$ to $-1$,  else it sets $\REFSTATE$ to $\RefNonselectedSTATE$. 
The initialization procedure is described formally in Algorithm~\ref{alg:initialization}.

Consider some node $v$. Any message that $v$ wants to send, whether to continue a previous broadcast or to pass on a newly generated message, is added to its list $\SENDLIST(e)$ for each of its edges $e$. 
Whenever one of $v$'s outgoing edges $e$ is free for a new message to be sent across it, $v$ invokes Procedure $\SendMessage$ for that edge, which picks an arbitrary unsent message from the list $\SENDLIST(e)$, transmits it over the edge, and erases it from the list.
Any messages that were generated by the node are 
added also to $\MLIST$, a list of messages the node has already heard. This process is described formally in Algorithm~\ref{alg:send-message}.

If node $u$ becomes a candidate (respectively, a referee), then its subsequent actions in this role are governed by Procedure $\Candidate$ (resp., 
Procedure $\Referee$).\footnote{Recall that these two procedures may run concurrently in the same node, in case it assumed both roles.} 
The appropriate procedure between these two procedures is called when the node receives a message. Another node's leader announcement message is processed outside of these two procedures.
Additionally, when a relevant message is received, the Procedure $\CandidateDecideResponse$ may also be invoked to resolve a dispute (to be described later). In addition to whichever procedure is called, if any, every awake node stores all the messages it receives in its list $\MLIST$ and participates in every broadcast that reaches it by also adding the message to $\SENDLIST(e)$ for all its edges $e$ except those over which the message was received.\footnote{Note that if a message currently in $v'$s $\SENDLIST(e)$ is received over edge $e$, then $v$ removes that message from $\SENDLIST(e)$.}
A node $u$, once awake, will eventually either broadcast a $\LEADER$ message (in its role as a candidate) and terminate, or receive such a message about another candidate, at which point $u$ stores the rank of the leader and then terminates
(having also forwarded the transmitted message to its neighbors).
The process governing the responses of the nodes to different incoming messages is described in Algorithm~\ref{alg:on-receive-message} (called Procedure $\OnReceiveMessage$). 

Procedure $\Candidate$ is run by a candidate to help it determine whether it will become the leader or not. Each candidate $u$ broadcasts a request message during Procedure $\Initialize$. Candidate $u$ then waits for $900 \log n$ replies from different referees.\footnote{A candidate may receive more than $900 \log n$ replies as there may be more referees. In such a case, the candidate only considers the \textit{first} $900 \log n$ replies it receives.} If one of these replies is a decline message, i.e., a referee says that $u$ is declined from being the leader, then $u$ updates its $\CANDSTATE$ to $\nonelectedSTATE$ and sends the message $\langle \ID_u, \LOSES\rangle$ to all neighbors. 
Otherwise if all the replies are $\APPROVED$ messages (and $u$ has not terminated yet), $u$ becomes the leader, i.e., it changes its $\CANDSTATE$ to $\electedSTATE$, then announces this to all the nodes and terminates.\footnote{Note that messages about another candidate becoming the leader (and thus possibly affecting $u$'s candidacy) are not handled in this procedure but are instead handled in Procedure $\OnReceiveMessage$.} 
The state progression of a candidate is pictorially represented in Figure~\ref{fig:cand-state-progression}. The pseudocode for the candidate's actions is given in Algorithm~\ref{alg:candidate}.
\begin{figure}
    \centering
    \includegraphics[height=2in]{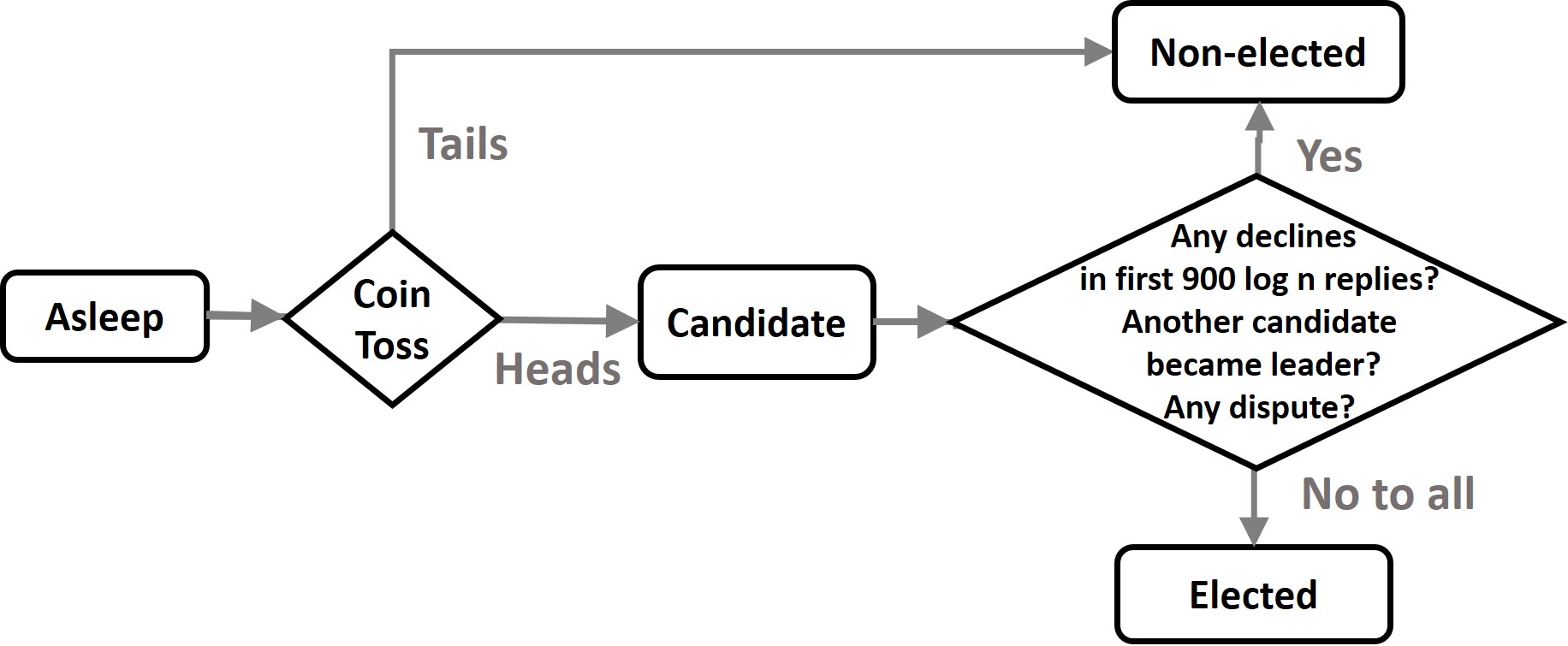}
    \caption{The state progression of a candidate.}
    \label{fig:cand-state-progression}
\end{figure}

Procedure $\Referee$ is run by a referee $r$ to decide which candidate becomes the leader.
A referee's $\REFSTATE$ may be set to one of the following three values.
\begin{itemize}
\item
$\REFSTATE= \RefReadySTATE$ holds when $r$ has not heard yet of any candidate. 
\item
$\REFSTATE= \RefChosenSelectedSTATE$ holds when $r$ has approved one candidate, referred to as its {\em chosen} candidate, and has declined every other candidate that approached it so far.
The rank of the chosen candidate is stored in the variable $\Chosen$.
\item
$\REFSTATE=\RefInDisputeSTATE$ holds when $r$ currently keeps track of two candidates:
the {\em chosen} $v$, whose rank $\ID_v$ is stored in the variable $\Chosen$, and a {\em contender} $w$, whose rank $\ID_w$ is stored in the variable $\Contender$, such that $\ID_w > \ID_v$, and all other candidates that approached $r$ so far were declined. In such a case, a {\em dispute} is currently in progress between $v$ and $w$. 
\end{itemize}

The state $\RefInDisputeSTATE$ is typically reached when $r$ has a chosen candidate $v$ that was approved by it, and later it gets a request from another candidate $w$ such that $\ID_w > \ID_v$. In this situation, $r$ cannot decline $w$ (since it has a higher $\ID$ than its current chosen), but at the same time,
it cannot approve $w$, since it may be that $v$ had already collected enough approvals and became the leader in the meantime. To resolve this uncertainty, $r$ declares a dispute, and broadcasts a  
message containing the dispute information, $\langle \ID_v, \ID_w, \DECIDE \rangle$, so that when $v$ eventually receives the message, it can make the comparison between itself and $w$ and resolve the dispute 
($v$ wins iff it already became the leader prior to receiving the dispute message).
While waiting for $v$'s response, $r$ stores $\ID_w$ of $w$ in the variable $\Contender$. 

As can be seen, the referee changes state only as a consequence of receiving a request from a candidate or receiving the result of a dispute from its chosen candidate. 
The pseudocode for the referee's actions is given in Algorithm~\ref{alg:referee}.

\begin{sloppypar}
We now observe how a referee $r$ responds upon receiving a message  $\langle \ID_u, \REQUEST\rangle$ from a candidate $u$. We only consider what happens if $r$'s $\REFSTATE \neq \RefNonselectedSTATE$ as otherwise $r$ is not a referee. Three states are possible.
\end{sloppypar}

\noindent
(A) 
If $r$'s $\REFSTATE = \RefReadySTATE$, meaning that $u$ is the first contender approaching $r$ with a request, then $r$ registers $u$ as its chosen candidate by storing $\ID_u$ in the variable $\Chosen$, broadcasts an approval message for $u$, $\langle \ID_u, \ID_r, \APPROVED \rangle$, and switches its $\REFSTATE$ to $\RefChosenSelectedSTATE$.

\noindent
(B)
If $r$'s $\REFSTATE = \RefChosenSelectedSTATE$, and the current chosen candidate is $v$, then there are two possibilities. The simpler situation is when $v$ is stronger than $u$, i.e., $\ID_v > \ID_u$, in which case $r$ may immediately broadcast a decline message for $u$, $\langle \ID_u, \ID_r, \DECLINED \rangle$.

The other case is that $u$ is the stronger of the two candidates, i.e., $\ID_v < \ID_u$. In this case, $u$ should normally replace $v$ as the chosen candidate, except if $v$ has already declared itself leader in the meantime. The way to resolve this question is a dispute between $u$ and $v$. First, $r$ checks its list $\MLIST$ of previously received messages to see if such a dispute between $u$ and $v$ is already in progress, i.e., if $\MLIST$ contains a previously received message of the form $\langle \ID_v, \ID_u, \DECIDE\rangle$ announcing the initiation of a dispute between $u$ and $v$, or even a message of the form $\langle \ID_v, \LOSES \rangle$ announcing the outcome of such a dispute (such a message necessarily indicates that $v$ has {\em lost} the dispute, since a ``win'' by $v$ can only occur if $v$ has already declared itself leader, in which case $v$ has already broadcast this fact and hence need not reply to the dispute).\footnote{Note that a  $\langle \ID_v, \LOSES \rangle$ message may be the result of a dispute between a candidate $v$ and some other candidate, not necessarily $u$. It is sufficient that $v$ lost its candidacy.} 
 There are three possible situations.
\begin{itemize}
\item
Referee $r$ has already received a message with the outcome of the dispute (namely, $v$ lost). 
Then $r$ only updates $\Chosen$ to $\ID_u$ and broadcasts an approval message for $u$, 
$\langle \ID_u, \ID_r, \APPROVED \rangle$.
\item
Referee $r$ has received a 
 message announcing a dispute, but has not yet heard about the outcome. 
Then $r$ only updates $\Contender$ to $\ID_u$ and updates $\REFSTATE$ to $\RefInDisputeSTATE$ and awaits news on the outcome (but does not broadcast any new messages). 
\item 
Referee $r$ did not hear of an existing dispute between $u$ and $v$.
Then it is up to $r$ to initiate a dispute, so $r$ registers $u$ as the contender by storing $\ID_u$ in the variable $\Contender$,
sets its $\REFSTATE$ to $\RefInDisputeSTATE$, and broadcasts a 
 dispute message for $v$ of the form $\langle \ID_v, \ID_u, \DECIDE \rangle$. 
\end{itemize}

\begin{sloppypar}
\noindent
(C) 
If $r$'s $\REFSTATE = \RefInDisputeSTATE$, signifying that {\em another} dispute (between $v$ and some other candidate) is in progress, then $r$ compares the new candidate $u$ with the current contender $w$.
If $u$ is weaker than $w$ ($\ID_u < \ID_w$), then $r$ immediately broadcasts a decline message for $u$, $\langle \ID_u, \ID_r, \DECLINED \rangle$. 
Otherwise ($\ID_u > \ID_w$), $r$ broadcasts a decline message for $w$, $\langle \ID_w, \ID_r, \DECLINED \rangle$, updates $\Contender$ to $\ID_u$, and initiates a new dispute by broadcasting a 
 dispute message for $\Chosen$ of the form $\langle \Chosen, \ID_u, \DECIDE \rangle$. 
\end{sloppypar}
The pseudocode for the referee's actions on receiving a request is given in Algorithm~\ref{alg:referee-request-response} (called Procedure $\RefereeRequestResponse$). 

Finally, let us describe how a node $v$ which is currently the chosen candidate of some referee $r$ (but may have possibly changed its $\CANDSTATE$ since the time it was approved by $r$) handles a 
 dispute request. 
Notice that a $\langle \ID_v, \ID_u, \DECIDE\rangle$ message is only sent to $v$ when it is weaker than $u$ ($\ID_v < \ID_u$), so when $v$ receives such a message, it must 
abdicate its candidacy (if it is still a candidate), unless it has already elected itself as leader (which is an irreversible decision) and terminated. 
Hence, if $v$'s 
$\CANDSTATE = \candidateSTATE$, then $v$ relinquishes its candidacy by setting $\CANDSTATE$ to $\nonelectedSTATE$ and broadcasts the result of the dispute as 
$\langle \ID_v, \LOSES \rangle$.\footnote{Note that this broadcast operation is 
unnecessary when $v$'s $\CANDSTATE = \nonelectedSTATE$, as $v$ would have previously broadcast a message announcing its loss. This broadcast would have been the result of either another dispute involving $v$ or $v$ receiving a decline from one of the referees.}  If, however, $v$'s $\CANDSTATE$ is set to $\electedSTATE$,
then $v$ has already broadcast a leader announcement message and terminated, so no additional response to the 
dispute message is required. Pseudocode for the actions of the chosen candidate on receiving a 
dispute message from a referee is given in Algorithm~\ref{alg:candidate-decide-response} (called Procedure $\CandidateDecideResponse$).

\begin{sloppypar}
As there are multiple referees generating 
 messages for various disputes, a referee may receive the results of a dispute it does not need to immediately process (but the result is stored for future processing, if any). 
If $r$'s $\REFSTATE = \RefInDisputeSTATE$, $r$'s chosen is $v$, $r$'s contender is $u$, and $r$ receives a reply to a dispute of the form 
$\langle \ID_v, \LOSES \rangle$, then $r$ immediately processes the message as follows.\footnote{Note that the message $\langle \ID_v, \LOSES \rangle$ may also be generated by a candidate $v$ upon receiving a decline message from a referee. For ease of writing, when we say ``reply to a dispute'', we mean any message of the form $\langle \ID_v, \LOSES \rangle$, regardless of how it was generated.}
First, $r$ updates $\Chosen$ to $\ID_u$, sets $\Contender$ to $-1$, and updates $\REFSTATE$ to $\RefChosenSelectedSTATE$. Subsequently, $r$ initiates the broadcast of an approval message for $u$, $\langle \ID_u, \ID_r, \APPROVED \rangle$. The referee's actions on receiving a reply from the chosen about an ongoing dispute are given in Algorithm~\ref{alg:referee-dispute-response} (called Procedure $\RefereeDisputeResponse$). The state progression of a referee is seen in Figure~\ref{fig:ref-state-progression}.
\end{sloppypar}

\begin{figure}
    \centering
    \includegraphics[page=7,height=2in]{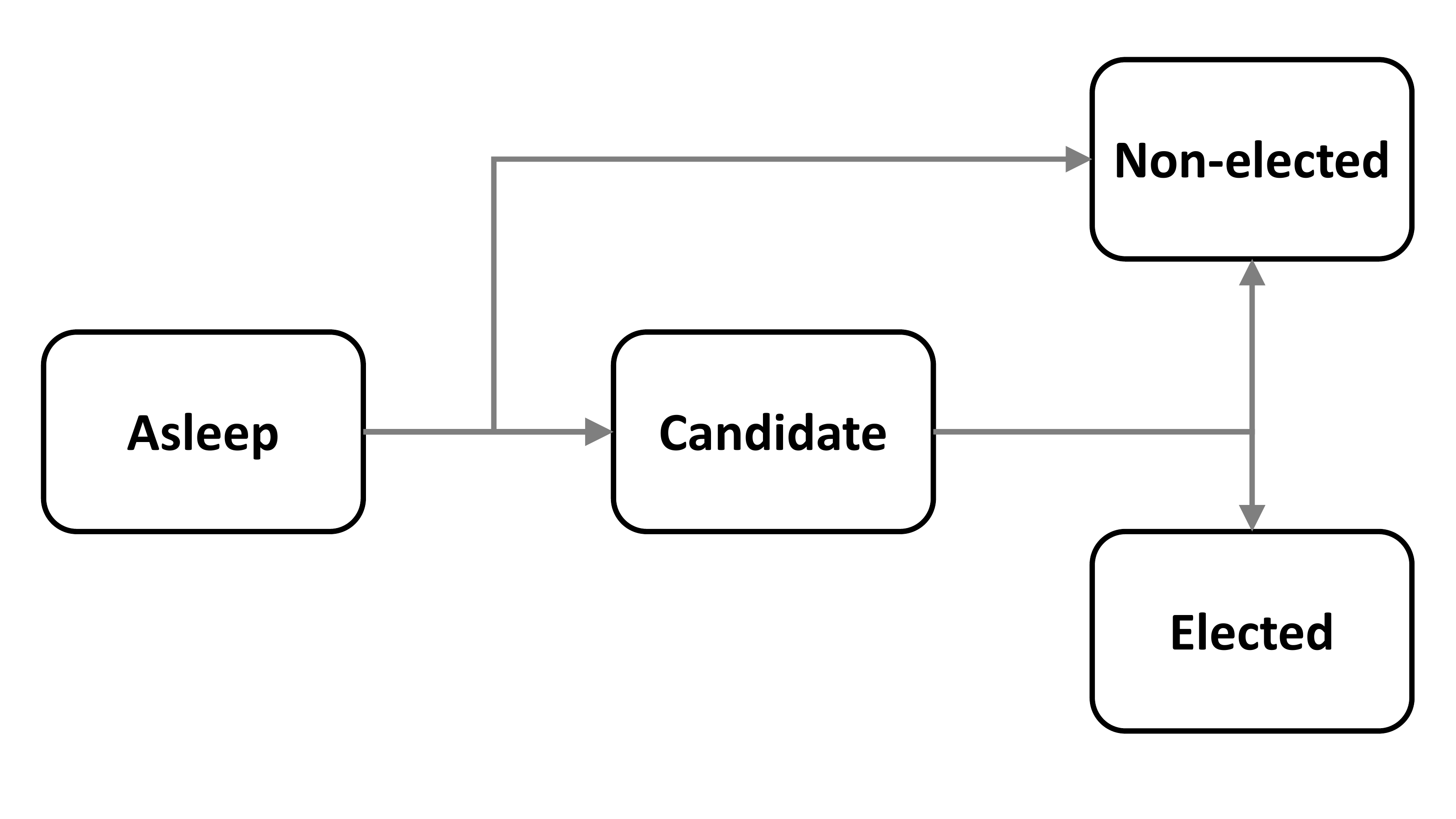}
    \caption{The state progression of a referee.}
    \label{fig:ref-state-progression}
\end{figure}

\subsection{Analysis}
\label{subsec:analysis}

We now prove the correctness of the algorithm and analyze its complexity. We prove a weaker time complexity bound of $O(D \log^2 n)$ here, with the stronger result of $O(D + \log^2 n)$ in Section~\ref{subsubsec:reduced-time-complexity}.
Before getting into the meat of the proof, let us make an important observation and subsequently state a useful lemma.

\begin{observation}\label{obs:everyone-wakesup}
From the time the first node is awake, all nodes are awakened in $O(D)$ time using $O(m)$ messages.
\end{observation}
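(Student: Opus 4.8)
The goal is to bound both the time and the message cost of the wakeup flooding. The plan is to treat the $\langle \WAKEUP \rangle$ messages as a standard flooding/broadcast process, triggered (possibly multiple times) by the adversary, and argue the two bounds essentially independently.

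\emph{Time bound.} First I would fix the first node $v_0$ to wake up, at the moment defined as time $0$. When $v_0$ wakes, Procedure $\Initialize$ sends $\langle \WAKEUP \rangle$ to all of $v_0$'s neighbors; upon receipt, each such neighbor that is still asleep wakes up and does the same. By a straightforward induction on the distance $d = \mathrm{dist}_G(v_0, u)$, every node $u$ receives a $\langle \WAKEUP \rangle$ message (and hence is awake), if it was not already awoken earlier by the adversary or by another flood, within $d$ time units: the inductive step uses the model assumption that a message crosses an edge in at most one time unit, together with the fact that local computation is instantaneous, so a node awoken at time $\le d-1$ forwards $\langle \WAKEUP \rangle$ to its neighbors who then receive it by time $\le d$. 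Since $d \le D$ for every node $u$, all nodes are awake by time $D = O(D)$. Adversarial wakeups of other nodes only shorten these times, and the FIFO/finite-delay assumptions do not affect the upper bound.

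\emph{Message bound.} For the message count, the key observation is that each node forwards $\langle \WAKEUP \rangle$ \emph{only the first time} it processes such a message (once awake, subsequent $\langle \WAKEUP \rangle$ messages are already in $\MLIST$ and are not re-broadcast). Hence each node $u$ initiates at most one broadcast of $\langle \WAKEUP \rangle$, sending it across each of its incident edges at most once; summing $\deg(u)$ over all $u$ gives at most $\sum_{u} \deg(u) = 2m = O(m)$ transmissions of $\langle \WAKEUP \rangle$ messages in total. (Adversarial wakeups do not add messages, since an adversarially awoken node also broadcasts only once.) This also covers the case of multiple spontaneous initiators: even if $k$ nodes are woken by the adversary, the per-node ``broadcast at most once'' invariant still holds, so the total remains $O(m)$.

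\emph{Expected main obstacle.} I expect the only delicate point to be the bookkeeping for the case of multiple, adversarially-timed initiators and the interleaving of $\langle \WAKEUP \rangle$ floods with the FIFO message queues $\SENDLIST(e)$: one must check that a $\langle \WAKEUP \rangle$ message waiting behind other messages in $\SENDLIST(e)$ still crosses $e$ within one time unit of being enqueued (which follows from the ``at most one time unit per edge'' convention being applied per-message, i.e., to the head-of-line transmission), and that being awoken by a non-$\WAKEUP$ algorithm message is also accounted for. These are routine given the model, so the statement should follow cleanly; the proof is essentially a restatement of the classical fact that flooding costs $O(D)$ time and $O(m)$ messages, adapted to the adversarial-wakeup setting.
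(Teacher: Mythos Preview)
Your argument is the standard flooding argument and is correct; the paper itself provides no proof for this observation, treating it as immediate. One minor imprecision worth tightening: the parenthetical claim that a queued $\langle\WAKEUP\rangle$ still crosses $e$ within one time unit of being enqueued does not follow from the per-message delay convention alone (a message sitting behind $k$ others may wait up to $k$ units), but your own second remark---that receipt of \emph{any} message wakes a sleeping node---is exactly what makes the induction on distance go through, since the \emph{first} message a freshly awoken node places on each outgoing edge is head-of-line and hence delivered within one unit.
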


Set $\REFLOW = 900 \log n$ and $\REFHIGH = 1100 \log n$. By Observation~\ref{obs:everyone-wakesup}, we see that all nodes awaken and thus participate in candidate selection and referee selection. 
Denote by $N_C$ and $N_R$ the number of candidates and referees selected in the algorithm, respectively.
We now bound $N_C$ and $N_R$ with high probability.

\begin{lemma}\label{lem:num-cand}\label{lem:num-ref}
With probability $1 - 1/n^3$, both the number of candidates $N_C$ and the number of referees $N_R$ are in $[\REFLOW, \REFHIGH]$.
\end{lemma}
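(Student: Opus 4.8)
The plan is to reduce the statement to a single application of the multiplicative Chernoff bound. First, by Observation~\ref{obs:everyone-wakesup}, within $O(D)$ time every one of the $n$ nodes has woken up and has therefore executed Procedure $\Initialize$, at which point it flips its two independent, private biased coins (one for candidacy, one for refereeship), each with success probability $p = 1000\log n / n$. Consequently $N_C \sim \mathrm{Bin}(n,p)$ and $N_R \sim \mathrm{Bin}(n,p)$, so that $\Exp{N_C} = \Exp{N_R} = \mu := 1000\log n$. I would stress at this point that although the adversary is adaptive, it has no bearing on the distribution of $N_C$ and $N_R$: every node eventually wakes up and flips its coins regardless of the adversary's schedule, and the coins are private, so the adversary cannot bias these counts.

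Next I would invoke the multiplicative Chernoff bound with relative deviation $\delta = 1/10$. The upper tail gives $\Prob{N_C \ge (1+\delta)\mu} \le \exp(-\delta^2\mu/3)$ and the lower tail gives $\Prob{N_C \le (1-\delta)\mu} \le \exp(-\delta^2\mu/2)$. Since $(1+\delta)\mu = 1100\log n = \REFHIGH$ and $(1-\delta)\mu = 900\log n = \REFLOW$, substituting $\mu = 1000\log n$ and $\delta = 1/10$ yields $\exp(-\delta^2\mu/3) = \exp(-(10/3)\log n)$ and $\exp(-\delta^2\mu/2) = \exp(-5\log n)$, both of which are at most $n^{-10/3}$ for all sufficiently large $n$ (the exact exponent depends only on the logarithm base and the chosen constant $1000$, and is comfortably larger than what is needed). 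Hence $\Prob{N_C \notin [\REFLOW,\REFHIGH]} \le 2n^{-10/3}$, and by the identical argument $\Prob{N_R \notin [\REFLOW,\REFHIGH]} \le 2n^{-10/3}$.

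Finally I would take a union bound over these two events, obtaining a failure probability of at most $4n^{-10/3} \le 1/n^3$ for all $n$ above a suitable threshold (here $n \ge 64$ suffices), which is precisely the claimed bound. Since the whole argument is a routine Chernoff estimate, I do not expect a genuine obstacle; the only points requiring a little care are (i) justifying that all $n$ coins are in fact flipped and that the adversary's adaptivity is irrelevant to the distribution of $N_C$ and $N_R$ (handled by Observation~\ref{obs:everyone-wakesup} together with the privacy of the coins), and (ii) checking the arithmetic of the exponents so that the combined failure probability genuinely falls below $n^{-3}$.
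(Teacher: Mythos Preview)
Your proposal is correct and follows essentially the same approach as the paper: model each node's choice as an independent Bernoulli trial with success probability $1000\log n/n$, apply the multiplicative Chernoff bound with $\delta=1/10$ and $\mu=1000\log n$ (the paper cites Theorems~4.4 and~4.5 of Mitzenmacher--Upfal for exactly these parameters), and repeat the argument for $N_R$. Your treatment is in fact more explicit than the paper's, spelling out the union bound, the arithmetic of the exponents, and the point that the adaptive adversary cannot influence the distribution of the private coin flips.
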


\begin{proof}
The random choice of a single candidate can be viewed as a Bernoulli trial with probability $1000 \log n/n$. Thus, the total number of candidates chosen $N_C$ is the sum of $n$ independent Bernoulli trials. Using known Chernoff bounds such as Theorem 4.4 and 4.5 in~\cite{MU17} where $\delta = 1/10$ and $\mu = 1000 \log n$, we see that the bounds in the lemma hold with the required probability. 
A similar argument holds for $N_R$.
\end{proof}

We are now ready to argue the correctness of the algorithm, i.e., show that exactly one candidate becomes a leader with high probability. This is done in two stages, showing first that the number of candidates that become leaders is {\em at least} one and then that this number is {\em at most} one. Throughout the following lemmas, we require that each node has a unique $\ID$. It is easy to see that this is true with high probability since each node selects its rank uniformly at random from $[1,n^4]$.

\begin{lemma}\label{lem:at-least-one-cand}
At least one candidate becomes a leader with high probability.
\end{lemma}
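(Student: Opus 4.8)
The plan is to track a single, carefully chosen candidate $v^*$ and argue that it must succeed. The natural choice is the candidate with the \emph{highest} rank $\ID$ among all candidates (well-defined with high probability, since ranks are distinct with high probability, and $N_C\in[\REFLOW,\REFHIGH]\geq 1$ with probability $1-1/n^3$ by Lemma~\ref{lem:num-cand}). Conditioning on these two high-probability events, I would show that $v^*$ cannot be declined by any referee it reaches, and that it reaches enough referees to collect $\REFLOW$ approvals, and hence becomes $\electedSTATE$ unless some \emph{other} candidate got there first — but in either case at least one candidate becomes leader.

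The key steps, in order. First, observe that by Observation~\ref{obs:everyone-wakesup} all nodes wake up, so all $N_R\geq\REFLOW=900\log n$ referees are selected and each of them eventually receives $v^*$'s request (requests are broadcast and every awake node forwards broadcasts). Second, argue that $v^*$ never receives a $\DECLINED$ message: a referee $r$ only declines a candidate $u$ in favor of a chosen/contender of \emph{strictly higher} rank (cases (B) and (C) of $\RefereeRequestResponse$), and $v^*$ has the maximum rank, so no such comparison can ever go against $v^*$; similarly, $v^*$ is never the weaker party in a dispute, so it is never sent a $\langle \ID_{v^*},\cdot,\DECIDE\rangle$ message and never broadcasts $\langle \ID_{v^*},\LOSES\rangle$. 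Third, consider how each referee $r$ treats $v^*$'s request when it arrives: since $v^*$ has the top rank, $r$ will (possibly after resolving an intermediate dispute involving a lower-ranked chosen, which by the previous step always resolves with the chosen losing) eventually set $\Chosen = \ID_{v^*}$ and broadcast $\langle \ID_{v^*},\ID_r,\APPROVED\rangle$; crucially $r$ will never subsequently revoke this, because revoking only happens in favor of a still-higher rank, which does not exist. Thus every one of the $N_R\geq 900\log n$ referees eventually broadcasts an approval for $v^*$, and $v^*$ receives at least $\REFLOW = 900\log n$ of them.

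Fourth, put this together with Procedure $\Candidate$: $v^*$ waits for $\REFLOW$ replies; by the above these replies are all $\APPROVED$ (no $\DECLINED$ can ever be directed at $v^*$), so \emph{unless $v^*$ has already terminated}, it sets $\CANDSTATE = \electedSTATE$ and becomes leader. The only way $v^*$ terminates earlier is by receiving a $\LEADER$ announcement from another candidate — but then that other candidate has become a leader, so a leader exists regardless. Either way, at least one candidate becomes a leader; combining with the failure probabilities of the two conditioning events ($1/n^3$ for $N_C\geq 1$, $O(1/n^c)$ for distinct ranks) gives the high-probability claim.

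The main obstacle I anticipate is the third step — nailing down that every referee's treatment of $v^*$'s request \emph{terminates in an approval of $v^*$ that is never revoked}, in the face of the asynchronous interleaving of disputes. One has to argue carefully that whenever $r$ is in state $\RefInDisputeSTATE$ and $v^*$'s request (or a dispute message naming $v^*$ as contender) is pending, the dispute eventually clears with the lower-ranked chosen losing (which relies on Step 2 applied to \emph{that} chosen, and on the chosen actually receiving the dispute message via broadcast and responding, or having already broadcast its own $\LOSES$), so that $r$ does not get permanently stuck. This is where the broadcast-delivery guarantees and the monotonicity of ranks must be combined most delicately; the rest of the argument is essentially bookkeeping over the cases of the referee procedures.
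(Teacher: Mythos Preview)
Your proposal is correct and follows essentially the same approach as the paper: single out the highest-rank candidate $v^*$, and argue the dichotomy that either $v^*$ collects enough approvals (since no referee can decline it on rank grounds) and becomes leader, or some other candidate has already declared itself leader (which is the only way a dispute involving $v^*$ at a referee could fail to resolve in $v^*$'s favor). The paper's version is considerably terser---it compresses your Steps~2--4 into a single sentence---whereas you spell out the referee-side mechanics and correctly flag the dispute-resolution progress argument as the delicate part; just be careful that your Step~3 as stated is slightly too strong (a referee whose chosen has \emph{already become leader} will never send an approval for $v^*$), but your Step~4 handles exactly this case.
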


\begin{proof}
By Lemma~\ref{lem:num-cand}, at least one node becomes a candidate with high probability and chooses a \ID. Each candidate waits for replies from $\REFLOW$ referees before deciding to become a leader. By Lemma~\ref{lem:num-ref}, at least that many nodes become referees with high probability. Thus, there exist enough referees with high probability that generate replies for a candidate so that it can become a leader. 

Let $u$ be the candidate with the highest \ID. Now $u$ broadcasts its candidacy to all referees. Either $u$ wins at every referee, receives the responses, and becomes a leader. Or else, $u$  loses at some referee, which implies that some other candidate is a leader. Thus at least one candidate becomes a leader.
\end{proof}

\begin{lemma}\label{lem:at-most-one-cand}
At most one candidate becomes a leader with high probability.
\end{lemma}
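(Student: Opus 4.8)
The plan is to argue by contradiction: suppose two distinct candidates $v$ and $w$ both reach $\CANDSTATE = \electedSTATE$, and derive a contradiction from the quorum‐intersection property guaranteed by Lemma~\ref{lem:num-ref} and Lemma~\ref{lem:num-cand}. Without loss of generality assume $\ID_v < \ID_w$. Each of $v$ and $w$ became leader only after collecting $\REFLOW = 900\log n$ $\APPROVED$ messages from \emph{distinct} referees. Since the total number of referees $N_R \le \REFHIGH = 1100\log n$ with high probability, the two sets $S_v$ and $S_w$ of referees that approved $v$ and $w$ respectively must intersect: $|S_v| + |S_w| = 1800\log n > 1100\log n \ge N_R$, so there is a referee $r \in S_v \cap S_w$. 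The crux is to show that no single referee $r$ can have (at the time it sent its $\APPROVED$ to each) approved both $v$ and $w$ in a way consistent with both becoming leaders.

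First I would establish a monotonicity invariant for a referee's $\Chosen$ variable: over the execution of Procedure~$\Referee$, the value of $\Chosen$ at a referee $r$ only ever changes to a \emph{strictly larger} rank. Inspecting the three response cases (A), (B), (C) and Procedure~$\RefereeDisputeResponse$, $\Chosen$ is initially $-1$, is set in case (A) to the first requester, and is only ever overwritten when a strictly stronger candidate's dispute is resolved in that candidate's favor (i.e.\ the weaker chosen has lost) — never decreased. Moreover $r$ broadcasts an $\APPROVED$ for a candidate $x$ only at moments when it sets $\Chosen \la \ID_x$; hence the sequence of candidates $r$ ever approves has strictly increasing ranks. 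So if $r \in S_v \cap S_w$ with $\ID_v < \ID_w$, then $r$ approved $v$ strictly before it approved $w$, and between these two events $r$ must have processed, in order: a request from $w$ (while $\REFSTATE = \RefChosenSelectedSTATE$ with chosen $v$), triggering case (B) with $\ID_v < \ID_w$; the initiation or discovery of a dispute $\langle \ID_v, \ID_w, \DECIDE\rangle$; and finally the receipt of a $\langle \ID_v, \LOSES\rangle$ message causing $r$ to reset $\Chosen \la \ID_w$ and broadcast the second approval.

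The key step — and the main obstacle — is to rule out the scenario where $r$ approves $w$ \emph{even though} $v$ had already declared itself leader. Here I would use the semantics of Procedure~$\CandidateDecideResponse$: a $\langle \ID_v, \LOSES\rangle$ message is produced by $v$ itself \emph{only} in the branch where $v$'s $\CANDSTATE = \candidateSTATE$ upon receiving the dispute message; if $v$ had already set $\CANDSTATE = \electedSTATE$ and terminated, it sends \emph{no} reply, so no $\langle \ID_v, \LOSES\rangle$ is ever generated \emph{by $v$}. One must also handle the subtlety flagged in the paper's footnote: a $\langle \ID_v, \LOSES\rangle$ message might instead arise from a \emph{different} dispute (against some candidate other than $w$) or from $v$ receiving a $\DECLINED$ from some referee. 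But in \emph{all} such cases, the message is generated only when $v$'s $\CANDSTATE$ is (or becomes) $\nonelectedSTATE$, which is incompatible with $v$ ever having reached $\electedSTATE$, since $\electedSTATE$ is irrevocable and $v$ announces it and terminates before any later state change could occur. Hence if $r$ ever resets $\Chosen$ from $\ID_v$ to $\ID_w$, then $v$ never became leader — contradicting our assumption. Symmetrically, $w$ cannot have approvals resetting away from it once it is elected, but we only needed the $v$ side. I would also need a small companion argument that $v$, once $\electedSTATE$, never emits a $\LOSES$ message retroactively (immediate from the code: the $\electedSTATE$ branch of $\CandidateDecideResponse$ does nothing, and $v$ has terminated), closing the loop. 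Combining: the existence of $r \in S_v \cap S_w$ forces $v$ to have lost, so $v \neq$ leader, contradiction; therefore at most one candidate becomes leader, and this holds whenever the high-probability events of Lemmas~\ref{lem:num-cand}, \ref{lem:num-ref} and of distinct ranks all hold, i.e.\ with high probability.
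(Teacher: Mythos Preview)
Your proof is correct, and it takes a genuinely different route from the paper's argument after the common quorum-intersection step.

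The paper, having found a shared referee $r$, proceeds by a temporal case analysis on the order in which the two candidates' request messages arrive at $r$ (three cases, depending on which request arrives first and whether $r$ already knows of a stronger or already-elected candidate), and argues operationally in each case that $r$'s behavior prevents both from being approved in a way that leads to two leaders. Your argument instead extracts a structural invariant: a referee's $\Chosen$ variable is monotone increasing, an $\APPROVED$ is emitted exactly when $\Chosen$ is set, and any transition of $\Chosen$ away from $\ID_v$ requires $r$ to have received a $\langle \ID_v, \LOSES\rangle$ message, which in turn can only be emitted by $v$ while transitioning to $\nonelectedSTATE$. This collapses the paper's three cases into a single implication and is arguably cleaner; it also makes explicit the invariant that is only implicit in the paper's case-by-case reasoning.

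One small imprecision worth tightening: your sentence ``between these two events $r$ must have processed, in order: a request from $w$ (while $\REFSTATE = \RefChosenSelectedSTATE$ with chosen $v$) \ldots'' is over-specific. There may be intermediate candidates $x$ with $\ID_v < \ID_x < \ID_w$ whose requests arrive at $r$ first, so the dispute that moves $\Chosen$ off $\ID_v$ need not involve $w$ at all, and $w$'s request may arrive while $r$ is already in $\RefInDisputeSTATE$ (case~(C)) rather than $\RefChosenSelectedSTATE$. This does not affect your argument, since the only fact you actually use is that \emph{some} transition of $\Chosen$ away from $\ID_v$ occurred, and every such transition (whether via case~(B) or via $\RefereeDisputeResponse$) is gated on $r$ having seen $\langle \ID_v, \LOSES\rangle$. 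Stating the invariant at that level of generality would make the step airtight.
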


\begin{proof}
Each candidate waits for positive replies from $\REFLOW$ referees in order to become a leader. By Lemma~\ref{lem:num-ref}, with high probability, the number of referees that exist in the system satisfies 
$$N_R\ge \REFLOW > 0.8 N_R.$$
Put another way, given that $N_R\in [\REFLOW, \REFHIGH]$, for any two candidates $u$ and $v$ that receive replies from the sets of referees $R_u$ and $R_v$ in order to decide on becoming a leader,
\begin{align*}
|R_u \cap R_v| &=~ |R_u \cup R_v| - |R_u \setminus R_v| - |R_v \setminus R_u| 
~\geq~ 
\REFLOW - (\REFHIGH - \REFLOW) - (\REFHIGH - \REFLOW) \\
 &=~ 900 \log n - (1100 \log n - 900 \log n) - (1100 \log n - 900 \log n) 
 ~ > ~ 1.
\end{align*}

We show that when two candidates $u$ and $v$ share a referee $r$ whose replies help them determine if they may become a leader, 
it is impossible for both candidates to become leaders. Thus, no more than one candidate becomes a leader with high probability. Without loss of generality, let $\ID_u < \ID_v$. Consider the sequence of arrival of candidacy messages at $r$ and replies. The following three cases cover all possible scenarios.

\inline Case 1:
$r$ knows of a candidate $w$ where either $\ID_w > \ID_u$ or $w$ has become a leader before a dispute message generated by $r$ reaches it. 

If $r$ knows of a candidate $w$ where $\ID_w > \ID_u$, then it is clear that at least $u$ will be rejected and not become the leader. Otherwise, if $r$ receives a candidacy message from either $u$ or $v$, $r$ will generate a dispute message and send it to $w$. If $w$ has become the leader before this dispute message reaches it, then $w$ would have already generated a leader announcement message and terminated. Node $r$ meanwhile will not confirm $u$ or $v$ as a leader until it hears back from $w$. So whichever of $u$'s or $v$'s candidacy message was at $r$ will not be approved once $w$'s leader announcement message reaches node $r$ and both candidates will not become the leader. 

In this situation, it is guaranteed that at least one of $u$ or $v$ not become the leader. 

\inline Case 2:
$v$'s candidacy message reaches $r$ first and $r$ subsequently replies that $v$ may become the leader, all before $u$'s candidacy message reaches $r$. 

In this case, since $\ID_u < \ID_v$, $r$ declines $u$ once its candidacy message reaches $r$. It is also possible that $v$ may have become the leader and generated a leader announcement message. In this situation, that message may propagate to $r$ and $u$, also resulting in $u$ not becoming the leader. In either situation, $u$ will not become the leader.

\inline Case 3:
$u$'s candidacy message reaches $r$ first and $r$ subsequently replies that $u$ may become the leader, all before $v$'s candidacy message reaches $r$. 

In this case, if $u$ received enough approvals and became the leader, then it may generate a leader announcement message. Now, this leader announcement message will either reach $r$ before or after $v$'s candidacy message reaches it. In either case, $v$ will not become the leader because $r$ will not approve $v$ before receiving the result of the dispute from $u$. If however, $u$ did not receive enough approvals before $v$'s candidacy message reaches $r$ and $r$'s subsequently generated dispute reaches $u$, then $u$ will give up its candidacy.

In either case, at most one of $u$ or $v$ will become the leader (perhaps neither of them).
\end{proof}

Thus, the algorithm is correct. We now give a useful lemma that is subsequently used to bound both the message and time complexity of the algorithm.


\begin{lemma}\label{lem:num-unique-msg-per-node}
Any node generates at most $O(\log n)$ unique messages with high probability to be broadcast over the course of the algorithm.
\end{lemma}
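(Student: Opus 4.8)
The goal is to bound, for each node $u$, the number of messages that $u$ \emph{itself generates} (as opposed to merely relays) over the whole execution by $O(\log n)$ with high probability. I would proceed by a case analysis on the role(s) $u$ plays, and for each role enumerate every point in the pseudocode where $u$ originates a new broadcast, bounding the count in terms of the quantities already controlled by Lemmas~\ref{lem:num-cand}/\ref{lem:num-ref}, namely that $N_C, N_R \le \REFHIGH = O(\log n)$ with probability $1-1/n^3$.

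\textbf{Wakeup and leader messages.} Every node may generate at most one $\langle\WAKEUP\rangle$ message (it is sent once during $\Initialize$), and at most one $\LEADER$ announcement (only if it becomes elected, which happens once and is irrevocable). These contribute $O(1)$.

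\textbf{Candidate role.} If $u$ is a candidate, in $\Initialize$ it generates one $\langle \ID_u, \REQUEST\rangle$ message. Thereafter, as a candidate it generates at most one further message: either a $\langle \ID_u, \LOSES\rangle$ (upon receiving a decline, or in $\CandidateDecideResponse$ upon receiving a losing dispute), or the $\LEADER$ message above — and once either happens $\CANDSTATE$ is frozen at $\nonelectedSTATE$ or $\electedSTATE$, so no more messages are generated in this role. So the candidate role contributes $O(1)$.

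\textbf{Referee role — the main obstacle.} This is the crux. A referee $r$ generates messages only in $\RefereeRequestResponse$ (on receiving a $\REQUEST$) and in $\RefereeDisputeResponse$ (on receiving a losing dispute reply). On a $\REQUEST$ it may broadcast: an $\APPROVED$ (cases A, B-outcome-known); a $\DECLINED$ (cases B with weaker requester, C with weaker requester); or, when a strictly stronger candidate arrives, possibly a $\DECLINED$ for the old contender plus a $\DECIDE$ dispute message (cases B/C). The danger is that $r$ receives $\Omega(n)$ requests and generates a $\DECLINED$ for most of them. The key observation that saves us is that the \emph{number of distinct candidates is at most $\REFHIGH = O(\log n)$ w.h.p.} (Lemma~\ref{lem:num-cand}), and a referee's response to a given candidate's $\REQUEST$ is determined by $\ID$ comparisons and its $\MLIST$; I would argue that although $r$ could in principle see the same candidate's request only once along each of many incoming edges, \emph{each candidate broadcasts its request at most once}, so by the FIFO/broadcast structure $r$ effectively processes at most $N_C = O(\log n)$ distinct $\REQUEST$ messages, hence generates at most $O(N_C)$ $\APPROVED$/$\DECLINED$/$\DECIDE$ messages from this source. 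Similarly, each losing dispute reply $\langle\ID_v,\LOSES\rangle$ that $r$ acts on triggers at most one $\APPROVED$; since each candidate $v$ loses at most once, $r$ acts on at most $N_C = O(\log n)$ such replies. Summing over all message types and both procedures gives $O(N_C) = O(\log n)$ generated messages for the referee role. I would need to be careful that a referee does not regenerate a message it already generated (e.g.\ it does not re-broadcast an approval when re-reading its own $\MLIST$) — this follows from the guard that messages are only generated on state transitions, and each relevant state transition of $r$ is caused by a fresh $\REQUEST$ from a not-yet-declined candidate or a fresh losing reply, so the total number of $r$'s state changes is $O(N_C)$.

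\textbf{Wrap-up.} Combining: $O(1)$ for wakeup/leader, $O(1)$ for the candidate role, and $O(N_C) = O(\REFHIGH) = O(\log n)$ for the referee role, all conditioned on the event of Lemma~\ref{lem:num-cand}/\ref{lem:num-ref} which holds with probability $1 - 1/n^3$. A union bound over the $n$ nodes still leaves failure probability $O(1/n^2)$, so every node generates $O(\log n)$ distinct messages with high probability. The main work — and the only genuinely delicate point — is the referee bookkeeping: arguing rigorously that the number of $\DECLINED$ messages $r$ originates is tied to the number of \emph{distinct} candidates rather than the number of request-message arrivals, and that disputes initiated by $r$ are likewise $O(N_C)$ because each new dispute is triggered by a strictly-increasing-rank arrival and there are only $N_C$ ranks in play.
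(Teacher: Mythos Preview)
Your proposal is correct and follows essentially the same approach as the paper: a case analysis over roles, with the candidate contributing $O(1)$ messages and the referee contributing $O(N_C)=O(\log n)$ messages via Lemma~\ref{lem:num-cand}. Your write-up is considerably more detailed (the paper's proof is three sentences), and your final union bound over the $n$ nodes is unnecessary---the single event $N_C\le\REFHIGH$ from Lemma~\ref{lem:num-cand} already bounds every node's count deterministically---but this is harmless over-caution rather than an error.
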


\begin{proof}
There is one instance of a $\WAKEUP$ message sent through the system. In addition, each candidate generates one candidacy request message and possibly one leader announcement message or candidacy loss message (as a reply to a dispute). Thus each candidate generates $O(1)$ messages. 
Each referee generates a reply to each candidate it hears from and possibly a dispute message with the current chosen as well. By Lemma~\ref{lem:num-cand}, there are $O(\log n)$ candidates a referee may have to reply to and generate disputes for, resulting in each referee generating $O(\log n)$ messages.
\end{proof}

\begin{lemma}\label{lem:total-unique-msgs}
There are $O(\log^2 n)$ unique messages with high probability broadcast in the system over the course of the algorithm.
\end{lemma}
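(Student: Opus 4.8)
The plan is to derive Lemma~\ref{lem:total-unique-msgs} as a simple product of Lemma~\ref{lem:num-cand} and Lemma~\ref{lem:num-unique-msg-per-node}. First I would note, from inspection of Procedures $\Initialize$, $\Candidate$, $\Referee$, $\RefereeRequestResponse$, $\RefereeDisputeResponse$ and $\CandidateDecideResponse$, that a node only ever \emph{originates} a message in one of three roles: as an awoken node sending $\langle \WAKEUP \rangle$; as a candidate sending its request $\langle \ID_u, \REQUEST\rangle$ and at most one subsequent announcement (a $\LEADER$ message or $\langle \ID_u, \LOSES\rangle$); or as a referee sending approvals, declines, and dispute messages. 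Every other message a node handles is merely relayed. Moreover, messages are tagged by the rank(s) of the candidate(s) they concern (and, for approvals and declines, also by the referee's rank), so all the $\langle \WAKEUP \rangle$ transmissions collapse to a single distinct message, while distinct candidates and distinct referees never originate the same message; hence the distinct messages broadcast over the run are, up to the one wakeup message, the disjoint union of the candidates' and the referees' contributions.

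Next I would condition on the event $\mathcal{E}$ that $N_C, N_R \in [\REFLOW, \REFHIGH]$, which by Lemma~\ref{lem:num-cand} holds with probability at least $1 - 1/n^3$. This is precisely the event under which Lemma~\ref{lem:num-unique-msg-per-node} applies, so under $\mathcal{E}$ every candidate originates $O(1)$ distinct messages and every referee originates $O(\log n)$ distinct messages, and there are $O(\log n)$ of each. Summing, the candidates contribute $O(\log n)\cdot O(1) = O(\log n)$ distinct messages, the referees contribute $O(\log n)\cdot O(\log n) = O(\log^2 n)$, and adding the single $\langle \WAKEUP \rangle$ message yields $O(\log^2 n)$ distinct broadcasts in total, with probability at least $1-1/n^3$.

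I do not anticipate a genuine obstacle here; the one point that needs care is the bookkeeping distinction between counting \emph{distinct message contents} versus counting \emph{senders} of a message --- this is why the ubiquitous $\langle \WAKEUP \rangle$ floods do not blow up the count, and why Lemma~\ref{lem:num-unique-msg-per-node}, which already caps the number of distinct messages a single referee can produce (even if it re-approves a candidate after a resolved dispute), suffices without any further union bound beyond the one already folded into Lemma~\ref{lem:num-cand}.
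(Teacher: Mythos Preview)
Your proposal is correct and follows essentially the same argument as the paper: both invoke Lemma~\ref{lem:num-cand} to bound the number of candidates and referees by $O(\log n)$, then combine this with the per-node bound of Lemma~\ref{lem:num-unique-msg-per-node} to obtain $O(\log^2 n)$ total distinct messages. Your write-up is somewhat more explicit about separating the candidate contribution ($O(\log n)\cdot O(1)$) from the referee contribution ($O(\log n)\cdot O(\log n)$) and about the conditioning event, but the structure and the lemmas used are identical.
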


\begin{proof}
Aside from the initial $\WAKEUP$ message, only candidates and referees generate unique messages to be broadcast. There are $O(\log n)$ such candidates and referees with high probability by Lemma~\ref{lem:num-cand}. Thus, there are totally $O(\log^2 n)$ unique messages with high probability broadcast in the system over the course of the algorithm.
\end{proof}

Combining the lemma with the fact that each broadcast of a unique message results in $O(m)$ messages, we get the following.
\begin{corollary}
The total message complexity is $O(m \log^2 n)$ with high probability.
\end{corollary}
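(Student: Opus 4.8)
The plan is to reduce the corollary to two facts that are already in hand: the bound on the number of \emph{distinct} messages that are ever broadcast (Lemma~\ref{lem:total-unique-msgs}), and a deterministic bound of $O(m)$ on the communication cost of a single flood-based broadcast. Multiplying the two gives the claim. Concretely, I would first argue that for any fixed message content $\mu$, the total number of times $\mu$ is transmitted over any edge during the entire execution is at most a constant, so the flood carrying $\mu$ costs $O(m)$ messages; then I would invoke Lemma~\ref{lem:total-unique-msgs} to say that only $O(\log^2 n)$ such distinct $\mu$'s are ever generated (with high probability), and sum.

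For the per-broadcast bound, the key observation is that every node participates in the flood of a given distinct message $\mu$ essentially once. When a node $v$ first hears $\mu$ (or generates it), it records $\mu$ in $\MLIST$ and places $\mu$ into $\SENDLIST(e)$ for every incident edge $e$ except the one over which $\mu$ arrived; a subsequent arrival of $\mu$ on an edge $e'$ only causes $v$ to delete $\mu$ from $\SENDLIST(e')$ if it is still pending there, and never to re-enqueue it. Hence $v$ transmits $\mu$ at most once on each incident edge, so each edge $\{u,v\}$ carries $\mu$ at most twice (once in each direction). This yields at most $2m$ transmissions per distinct message. The one message type that is not generated by a candidate or referee, namely $\langle \WAKEUP \rangle$, is handled separately: by Observation~\ref{obs:everyone-wakesup} its flood costs $O(m)$ messages.

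Combining: with high probability the set of distinct messages ever broadcast has size $O(\log^2 n)$ by Lemma~\ref{lem:total-unique-msgs} (which itself rests on the $O(\log n)$ bound on the number of candidates and referees from Lemma~\ref{lem:num-cand}), and each such broadcast contributes $O(m)$ transmissions, for a total of $O(m\log^2 n)$ messages. Note there is a single high-probability failure event to account for, the one inherited from Lemma~\ref{lem:total-unique-msgs}, so no additional union bound is needed. The main technical point to get right is the per-broadcast $O(m)$ bound under asynchrony and $\mathcal{CONGEST}$: because the same content can be re-received on several edges and because $\SENDLIST$ entries are added and removed dynamically, one must be careful to show that no node ever enqueues the same distinct message for re-transmission twice on the same edge, using the $\MLIST$ bookkeeping and the FIFO property of the links; once that is established, the counting is immediate.
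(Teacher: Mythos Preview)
Your proposal is correct and follows essentially the same approach as the paper: invoke Lemma~\ref{lem:total-unique-msgs} for the $O(\log^2 n)$ bound on distinct broadcast messages and multiply by the $O(m)$ cost of a single flood. The paper's own justification is in fact a single sentence (``each broadcast of a unique message results in $O(m)$ messages''), so your explicit argument that $\MLIST$ bookkeeping prevents any node from enqueuing the same content twice on the same edge is more detailed than what the paper provides, but entirely in the same spirit.
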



\begin{lemma}
The run time of the algorithm is $O(D \log^2 n)$ with high probability.
\end{lemma}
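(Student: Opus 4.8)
The plan is to bound the run time by combining two ingredients: a \emph{congestion bound} controlling how long a single flooded message takes to reach everyone, and a \emph{causal-depth bound} showing that only a constant number of flooding waves occur once all nodes have woken up. Throughout I work on the high-probability event on which Lemma~\ref{lem:total-unique-msgs} holds and on which exactly one node becomes leader (Lemmas~\ref{lem:at-least-one-cand} and~\ref{lem:at-most-one-cand}).

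For the first ingredient: by Lemma~\ref{lem:total-unique-msgs} at most $O(\log^2 n)$ distinct messages are ever created, and since a node rebroadcasts only the first copy of a message it receives (remembering in $\MLIST$ what it has heard), any edge carries at most $O(\log^2 n)$ messages over the entire run; in particular $\SENDLIST(e)$ never holds more than $O(\log^2 n)$ messages. Because an edge is never idle while its $\SENDLIST$ is nonempty, and each queued message is transmitted in at most one time unit and then erased, once a message $M$ is placed in $\SENDLIST(e)$ it crosses $e$ within $O(\log^2 n)$ time. Propagating $M$ one hop at a time along a shortest path of length at most $D$ then shows that $M$ reaches every node within $O(D\log^2 n)$ of its creation. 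I emphasize that this is a \emph{global} cap on the congestion of each edge, so it applies simultaneously to all $O(\log^2 n)$ distinct messages; concurrent broadcasts do not slow one another beyond this bound. (A sharper pipelining analysis improves the per-broadcast cost to $O(D+\log^2 n)$; that is what Section~\ref{subsubsec:reduced-time-complexity} does.)

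For the second ingredient I focus on the unique leader $u^*$. By Observation~\ref{obs:everyone-wakesup} every node — hence every candidate, which broadcasts its request inside $\Initialize$ — is active by time $O(D)$, so by the congestion bound every request, in particular $u^*$'s, has reached every referee by time $O(D\log^2 n)$. I then claim that each of the $\REFLOW$ referees $r$ whose approvals $u^*$ collects broadcasts its $\APPROVED$ message for $u^*$ within at most three further flooding waves after receiving $u^*$'s request. Inspecting Procedures $\Referee$, $\RefereeRequestResponse$, $\CandidateDecideResponse$ and $\RefereeDisputeResponse$: when $r$ processes $u^*$'s request it either approves $u^*$ at once, or it emits (or has already recorded in $\MLIST$) a dispute message against its current chosen candidate $v$ (where $\ID_v<\ID_{u^*}$); in the latter case $v$ responds to that dispute \emph{without delay} — broadcasting $\langle \ID_v,\LOSES\rangle$ if still a candidate, and otherwise because it has already broadcast $\LOSES$ or a $\LEADER$ message — after which $r$ promotes $u^*$ to its chosen and broadcasts the approval. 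The branches in which $r$ instead declines $u^*$, or in which some referee of $u^*$ ends up supporting a candidate of higher rank, cannot occur: each would force $u^*$ to relinquish or would deliver some other candidate's $\LEADER$ message to $u^*$, contradicting the choice of $u^*$. Hence the chain ``$u^*$'s request $\to$ $r$'s dispute $\to$ $v$'s $\LOSES$ $\to$ $r$'s $\APPROVED$'' has depth at most $3$; so, using the congestion bound for each of the $O(1)$ waves, $u^*$ accumulates all $\REFLOW$ approvals by time $O(D\log^2 n)$, broadcasts its leader announcement, and that message reaches — and terminates — every node within a further $O(D\log^2 n)$ time, giving the stated bound.

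The step I expect to be the main obstacle is justifying that the causal chain at a referee really is of constant depth, i.e.\ ruling out an \emph{unbounded} cascade of disputes and promotions (a run of ever-stronger contenders, each spawning a fresh dispute and $\LOSES$). The key observation that makes this go through is that \emph{all} request messages have already been delivered by time $O(D\log^2 n)$; past that point no referee receives a new request, and since a referee only ever initiates a dispute in response to a request, it initiates no new disputes and performs at most one further promotion (leaving the $\RefInDisputeSTATE$ state), which closes the chain. Carefully carrying out this case analysis over the referee's reactions in $\RefereeRequestResponse$ and $\RefereeDisputeResponse$, together with the routine but delicate asynchronous bookkeeping — pinning every event to the message receipt that triggers it and to a fixed flooding wave — is where the real work lies; all probabilistic content has already been isolated in Lemma~\ref{lem:total-unique-msgs} and the correctness lemmas.
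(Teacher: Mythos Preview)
Your proposal is correct and follows essentially the same approach as the paper's own proof: both combine (i) a per-edge congestion bound of $O(\log^2 n)$ from Lemma~\ref{lem:total-unique-msgs}, yielding $O(D\log^2 n)$ per broadcast, with (ii) the observation that only $O(1)$ sequential ``logical phases'' of broadcasts are needed along the leader's causal chain (request $\to$ dispute $\to$ $\LOSES$ $\to$ approval $\to$ $\LEADER$). Your treatment is more explicit than the paper's --- in particular your case analysis of the referee state machine and your discussion of why cascades of successive promotions cannot lengthen $u^*$'s chain go beyond what the paper spells out --- but the structure and the key ideas are the same.
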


\begin{proof}
By Observation~\ref{obs:everyone-wakesup}, all nodes wake up in $O(D)$ time. 
One of the woken nodes,
say $u$, will go on to become the leader. Let us bound the number of ``logical phases'' of broadcasts needed until all the nodes are made aware that $u$ is the leader.
(Note that we only use the word ``phase'' in this analysis, but we do not use this terminology in the algorithm itself.) 
Each phase is responsible for certain information being broadcast to nodes, and the phases, as described, occur sequentially. Furthermore, different phases may take different amounts of time. In the first phase, $u$ broadcasts its candidacy. In the next phase, each of the referees may need to broadcast a dispute. In the subsequent phase, each of the candidates that is a target of a dispute needs to broadcast its reply. In the next phase, each of these referees broadcasts its reply to $u$'s candidacy request. In the final phase, $u$ broadcasts that it is the leader. Thus, there are $O(1)$ such phases. 

In each of these phases, a broadcast originating at some node $u$ is complete when the message $m$ reaches all other nodes. The shortest path between $u$ and any other node is of length at most $O(D)$. By Lemma~\ref{lem:total-unique-msgs}, there are at most $O(\log^2 n)$ unique messages generated in the system with high probability. Thus $m$ may be delayed at each node in the shortest path by at most $O(\log^2 n)$ other messages with high probability, resulting in a total time of $O(D \log^2 n)$ with high probability for the phase to complete. Since there are $O(1)$ phases, the total time until the algorithm completes is $O(D \log^2 n)$ with high probability.
\end{proof}

\begin{theorem}
There exists an algorithm that solves 
leader election with high probability in any arbitrary graph with $n$ nodes, $m$ edges, and diameter $D$ in $O(D \log^2 n)$ time with high probability using $O(m \log^2 n)$ messages with high probability in an asynchronous system with adversarial node wakeup.
\end{theorem}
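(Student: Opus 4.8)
The plan is to assemble the theorem from the lemmas already established, since each required property has been proved in isolation; what remains is to combine them via a union bound and to verify termination. First I would establish \emph{correctness}: by Lemma~\ref{lem:at-least-one-cand} at least one candidate sets $\CANDSTATE$ to $\electedSTATE$, and by Lemma~\ref{lem:at-most-one-cand} at most one does, so intersecting these two events gives that exactly one node $u$ becomes the leader with high probability. Since $u$ broadcasts a $\LEADER$ message before terminating and, by Observation~\ref{obs:everyone-wakesup}, the graph is connected with all nodes awake, this broadcast reaches every node; upon receiving it (as handled in Procedure $\OnReceiveMessage$) each node records $\ID_u$ and terminates. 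This also settles \emph{termination}: every node either elects itself leader and halts or eventually receives the unique leader announcement and halts, and since only $O(m\log^2 n)$ messages are ever sent (Lemma~\ref{lem:total-unique-msgs} and the following corollary) and each suffers only finite delay, the entire execution is finite.

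Next I would fold in the auxiliary probabilistic hypotheses used by the correctness lemmas. Those lemmas assume all ranks are distinct; since each node draws its rank uniformly from $[1,n^4]$, this fails with probability at most $\binom{n}{2}/n^4 \le 1/n^2$. They also rest on Lemma~\ref{lem:num-cand}/\ref{lem:num-ref}, i.e.\ on $N_C,N_R\in[\REFLOW,\REFHIGH]$, which holds with probability at least $1-1/n^3$. I would then invoke the two complexity results verbatim: the corollary following Lemma~\ref{lem:total-unique-msgs} gives total message complexity $O(m\log^2 n)$ with high probability, and the run-time lemma gives $O(D\log^2 n)$ time with high probability (both again relying only on the count bound of Lemma~\ref{lem:num-cand}).

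Finally, a union bound over the constantly many failure events --- distinct ranks, the count bound, the ``at least one leader'' event, the ``at most one leader'' event, the message bound, and the time bound, each failing with probability at most $1/n^{c'}$ for a suitable constant $c'$ --- shows that with probability at least $1-1/n^c$ the algorithm elects a unique leader, informs all nodes, terminates, and simultaneously meets the $O(D\log^2 n)$ time and $O(m\log^2 n)$ message bounds. The only mildly delicate point is the bookkeeping of constants in this union bound, together with checking that the dependence of the error exponent on the sampling probability $1000\log n/n$ (the point noted in the footnote about $n_0$) can be absorbed so that the final $c$ is an arbitrary constant; everything else follows immediately from the cited results.
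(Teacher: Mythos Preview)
Your proposal is correct and follows essentially the same approach as the paper: the theorem is a direct assembly of the previously established lemmas (Lemmas~\ref{lem:at-least-one-cand} and~\ref{lem:at-most-one-cand} for correctness, the corollary to Lemma~\ref{lem:total-unique-msgs} for messages, and the run-time lemma for time), and in fact the paper states the theorem without a separate proof, treating it as an immediate consequence of those results. Your explicit union-bound bookkeeping and termination check are more detailed than what the paper provides, but they add nothing new beyond making the implicit combination explicit.
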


\subsection{Improvements}

\subsubsection{Knowledge of $n$}
In the above analysis, it may be noted that nodes do not need to know the exact value of $n$. In fact, it is easy to extend the algorithm and analysis
if nodes know the value of $n$ up to a  constant factor. More precisely, it is sufficient if all nodes know either (i) the value of a constant $c_1$, where $0 < c_1 \leq 1$, and a lower bound on $n$, $n'$, such that $c_1 n \leq n' \leq n$ or (ii) the value of a  constant $c_2$, where $1 \leq c_2$, and an upper bound on $n$, $n^*$, such that $n \leq n^* \leq c_2 n$. 
By adjusting the coin toss probability to some $c \log n'/n'$ (or $c \log n^* / n^*$) for a carefully chosen value of $c$, we can show that the analysis goes through for a sufficiently large $n$.

\subsubsection{Reducing Time Complexity to $O(D + \log^2 n)$}
\label{subsubsec:reduced-time-complexity}
The analysis of the runtime in Section~\ref{subsec:analysis} can be tightened further. The below lemma comes from Theorem~1 in Topkis~\cite{T89}, adapted to the current setting and terminology. Notice that we can use their Theorem because the process of flooding they study is being implemented here via $\SendMessage(e)$.

\begin{lemma}\label{lem:reduced-time-complexity}
It takes $D+k-1$ time to broadcast $k$ messages in a graph with diameter $D$ in the asynchronous setting.
\end{lemma}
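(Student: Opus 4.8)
The plan is to obtain the lemma as a direct consequence of Topkis's Theorem~1~\cite{T89}, so the real work is to argue that the communication discipline used by our algorithm is an instance of the flooding process analyzed there, after which his bound transfers verbatim. I would first recall Topkis's model: a set of source nodes hold messages at time $0$; whenever a node first learns a message it enqueues that message for forwarding on every incident edge other than the one it arrived on; each edge forwards at most one enqueued message per time unit, in some order, and is never idle when its queue is nonempty; and crossing an edge costs at most one time unit. Under these assumptions, $k$ concurrently flooded messages all reach every node within $D+k-1$ time.

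Next I would check the correspondence with our algorithm. Procedure $\SendMessage(e)$ fires whenever edge $e$ is free and $\SENDLIST(e)$ is nonempty, transmits an arbitrary message from $\SENDLIST(e)$ in at most one time unit, and deletes it; the rule for populating $\SENDLIST(e)$ — a node adds each message it generates or first hears to $\SENDLIST(e')$ for every incident edge $e'$ except the one the message came in on, and (using $\MLIST$) a given unique message is never re-enqueued on an edge — is exactly the flooding forwarding rule. Hence the evolution of the $k$ broadcasts in our system obeys the same rules Topkis studies, and the $D+k-1$ bound applies.

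To keep the statement self-contained I would also sketch the pipelining intuition: fix a target $v$ and a message $m$ with source $s$, and a shortest path $s=x_0,x_1,\dots,x_\ell=v$ with $\ell\le D$; letting $t_j$ be the first time $m$ reaches $x_j$, one has $t_{j+1}\le t_j+1+d_j$, where $d_j$ counts the messages transmitted on edge $(x_j,x_{j+1})$ while $m$ is waiting at $x_j$, so $t_\ell\le \ell+\sum_j d_j$; the crux is that each of the other $k-1$ messages is counted by at most one $d_j$, because once a message overtakes $m$ on the path it reaches each later $x_{j'}$ no later than $m$ does and hence never again sits in front of $m$ on an edge while $m$ waits.

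The main obstacle I expect is precisely this last "stays ahead" step when the per-edge forwarding order is arbitrary rather than FIFO, since a message that overtakes $m$ on one edge could in principle be scheduled behind $m$ on a later edge and overtake it a second time; this is the technical heart of Topkis's theorem and the reason we lean on his result rather than reprove it, though it is also worth noting that our model's per-edge FIFO delivery makes the simple induction go through cleanly. I would additionally flag that the lemma measures time from the moment the $k$ messages are ready, and that when it is applied to get the $O(D+\log^2 n)$ running time one also uses that the algorithm has only $O(1)$ causally ordered "phases" of broadcasts, so staggered generation times cost merely a constant factor on top of $D+k-1$ with $k=O(\log^2 n)$.
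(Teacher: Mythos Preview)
Your proposal is correct and follows essentially the same approach as the paper: the paper does not prove this lemma directly but simply cites Theorem~1 of Topkis~\cite{T89} and remarks in one sentence that the flooding process studied there is implemented by Procedure~$\SendMessage(e)$. Your write-up does exactly this, only more carefully---you spell out Topkis's model, verify the correspondence via $\SENDLIST$ and $\MLIST$, and add a self-contained sketch of the pipelining argument---so it subsumes the paper's treatment.
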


Combining Lemma~\ref{lem:reduced-time-complexity} with Lemma~\ref{lem:total-unique-msgs}, which states that there are at most $O(\log^2 n)$ unique messages with high probability, and the argument (from the proof in Section~\ref{subsec:analysis}) that there are $O(1)$ ``logical phases'' of broadcasts, we see that the run time of the algorithm is $O(D + \log^2 n)$ with high probability. When coupled with our previous analysis of correctness and message complexity, we arrive at the following theorem.
\begin{theorem}
There exists an algorithm that solves leader election with high probability in any arbitrary graph with $n$ nodes, $m$ edges, and diameter $D$ in $O(D + \log^2 n)$ time with high probability using $O(m \log^2 n)$ messages with high probability in an asynchronous system with adversarial node wakeup.
\end{theorem}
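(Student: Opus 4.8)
The plan is to assemble the statement from the components already established, the only genuinely new ingredient being the sharpened time analysis. For correctness, I would combine Lemma~\ref{lem:at-least-one-cand} and Lemma~\ref{lem:at-most-one-cand}: conditioned on the high-probability events that $N_C, N_R \in [\REFLOW,\REFHIGH]$ and that all ranks are distinct, these two lemmas together say that exactly one candidate sets $\CANDSTATE = \electedSTATE$. That node floods its $\LEADER$ announcement, which --- since by Observation~\ref{obs:everyone-wakesup} every node is awake and participates in every flood reaching it --- is delivered to all nodes; each node then stores the leader's rank and terminates on receipt, and the leader terminates on sending, so the algorithm halts everywhere with the unique leader's identity known to all. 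For the message bound, Lemma~\ref{lem:total-unique-msgs} gives $O(\log^2 n)$ distinct messages with high probability, and disseminating each one is a single flood costing $O(m)$ messages, for a total of $O(m\log^2 n)$ with high probability.

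For the time bound I would refine the $O(D\log^2 n)$ argument of Section~\ref{subsec:analysis}. Observation~\ref{obs:everyone-wakesup} contributes an additive $O(D)$ for the wakeup. For the rest, reuse the decomposition of the execution into $O(1)$ \emph{logical phases} (candidacy flood; referee dispute floods; chosen-candidate dispute replies; referee approval floods to the eventual leader; leader-announcement flood), in which the messages of phase $j{+}1$ are generated only in reaction to phase-$j$ messages being received. The improvement is to bound each phase not by (path length)~$\times$~(maximum queue length) but through Topkis's pipelining bound, Lemma~\ref{lem:reduced-time-complexity}: flooding $k$ messages completes within $D + k - 1$ time. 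Thus phase $j$, carrying $k_j$ distinct messages, completes within $O(D + k_j)$ time of its start, and summing over the $O(1)$ phases while using $\sum_j k_j \le$ (total number of distinct messages) $= O(\log^2 n)$ from Lemma~\ref{lem:total-unique-msgs} yields a post-wakeup running time of $O(D + \log^2 n)$; adding the wakeup term gives $O(D + \log^2 n)$ with high probability. A union bound over the constantly many families of high-probability events invoked keeps the overall success probability at $1 - 1/n^c$.

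The step I expect to be the main obstacle is the interface between the logical-phase decomposition and Lemma~\ref{lem:reduced-time-complexity}. Topkis's bound is naturally phrased for a batch of floods all present at a common start time, whereas here the $k_j$ messages of a phase are injected at staggered, adversarially chosen moments (each when its triggering phase-$(j{-}1)$ message first arrives at the relevant referee or candidate), and phases may overlap in real time across different regions of the graph. One must argue that the pipelining estimate still holds --- for instance, by noting that the flooding procedure $\SendMessage(e)$ is precisely the process Topkis analyzes, and that a message injected late is only ``helped'' by the fact that earlier queues have partially drained, so the simultaneous-injection case remains the worst case --- and that congestion does not accumulate across phases beyond the additive $k_j$ terms. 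Pinning down this accounting, rather than the routine Chernoff and union-bound bookkeeping, is where the substance of the proof lies.
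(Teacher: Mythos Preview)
Your proposal is correct and follows essentially the same route as the paper: invoke the previously established correctness (Lemmas~\ref{lem:at-least-one-cand} and~\ref{lem:at-most-one-cand}) and message bound (Lemma~\ref{lem:total-unique-msgs} times $O(m)$ per flood), and sharpen the time analysis by combining the $O(1)$ logical phases with Topkis's pipelining bound (Lemma~\ref{lem:reduced-time-complexity}) and the $O(\log^2 n)$ total distinct messages. The paper's proof is actually terser than yours; the interface issue you flag between staggered injections and Topkis's batch formulation is handled in the paper only by the remark that the flooding process analyzed by Topkis is exactly the one implemented by $\SendMessage(e)$, so your caution there is well placed but does not indicate a divergence in approach.
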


\section{Conclusion}\label{sec:conclusion}
We have presented a randomized algorithm for asynchronous leader election in general networks. Our algorithm has message and time bounds
that are both within a polylogarithmic factor of the lower bound, and is the first such singularly optimal algorithm
presented for general asynchronous networks. 

Two important open questions remain. First, is it possible to obtain near singularly optimal bounds
 using a \emph{deterministic} algorithm?
Our algorithm needs an accurate knowledge of the network size (at least up to a constant factor). 
Second, can we get a (near) singularly optimal algorithm (even randomized) without the restriction
that nodes need an accurate knowledge of $n$ or even no knowledge of $n$? 
Third, can we design (near) singularly optimal algorithms for other fundamental
problems such as minimum spanning tree and shortest paths in the asynchronous model.

\bibliographystyle{plainurl}
\bibliography{references-all}

\clearpage

\appendix

\centerline{\LARGE \textbf{Appendix}}
\section{Formal code}\label{sec:code}

\alglanguage{pseudocode}
\begin{algorithm}[H]
	\caption{Procedure $\Initialize(u)$, run by node $u$.}
	\label{alg:initialization}
	
	\begin{algorithmic}[1]
	\State Choose an integer in $[1,n^4]$ uniformly at random to be $u$'s rank $\ID_u$
	\State Flip a biased coin $C_C$ with probability $1000 \log n/n$
    \If{$C_C$ comes heads}
		\State $\CANDSTATE \gets \candidateSTATE$ 
		\State $\NUMREPLIES \gets 0$
		\State Add $\langle \ID_u,\REQUEST \rangle$ to $\SENDLIST(e)$ for every edge $e$ adjacent to $u$ 
	\Else 
		\State $\CANDSTATE \gets \nonelectedSTATE$
	\EndIf

	\State Flip another biased coin $C_R$ with probability $1000 \log n/n$
    \If{$C_R$ comes heads}	
	    \State $\REFSTATE \gets \RefReadySTATE$
	    \State Set $\Contender \gets -1$
	    \State Set $\Chosen \gets -1$
	\Else
	    \State $\REFSTATE \gets \RefNonselectedSTATE$
	\EndIf

	\end{algorithmic}
\end{algorithm}

\alglanguage{pseudocode}
\begin{algorithm}[H]
	\caption{Procedure $\SendMessage(u,e)$, run by node $u$ whenever it can send a new message on a given edge $e$.}
	\label{alg:send-message}
	
	\begin{algorithmic}[1]
	\If{$\SENDLIST(e)$ is not empty (* $u$ has a message to be sent on edge $e$ *)}
	    \State Choose any message $M$ in $\SENDLIST(e)$
	    \State Send $M$ on $e$
	    \State Add $M$ to $\MLIST$ if not already present there (* $u$ might have generated this message and thus not heard it from another node *)
	\EndIf
	\end{algorithmic}
\end{algorithm}

\alglanguage{pseudocode}
\begin{algorithm}[H]
	\caption{Procedure $\OnReceiveMessage(u,M)$, run by node $u$ upon receiving a message $M$.}
	\label{alg:on-receive-message}
	
	\begin{algorithmic}[1]
	\If{the message $M$ is in $\MLIST$ (* $u$ previously heard of message $M$ *)}
	    \State If $M$ was received over edge $e$, remove $M$, if present, from $\SENDLIST(e)$
	\Else{  (* $u$ has not previously heard of message $M$ *)}
	    \State Add $M$ to $\MLIST$
	    \State Add $M$ to $\SENDLIST(e)$ for each edge $e$ of $u$ except the edge the message was received on
	    \If{$M = \langle \WAKEUP \rangle$}
	        \State Invoke Procedure $\Initialize$
	    \ElsIf{$M = \langle \ID_v, \LEADER \rangle$ (* $u$ receives a message about $v$ being the leader *)}
	        \State Set leader as $v$ (* Note that necessarily $v\ne u$ *)
	        \State Set $\CANDSTATE \gets \nonelectedSTATE$
	        \State Terminate
	    \ElsIf{$M = \langle \ID_u, \ID_r, \DECLINED \rangle$ OR $M = \langle \ID_u, \ID_r, \APPROVED \rangle$  (* $u$ receives a response from a referee $r$ about candidacy *)}
	        \State Invoke Procedure $\Candidate(u)$
	   \ElsIf{$M = \langle \ID_u, \ID_v, \DECIDE \rangle$ (* $u$ receives a  
	   dispute message about another candidate $v$ *)}
	        \State Invoke Procedure $\CandidateDecideResponse(u)$
	   \ElsIf{$\REFSTATE \neq \RefNonselectedSTATE$}
	        \State Invoke Procedure $\Referee(u)$
	   \EndIf
	\EndIf
	\end{algorithmic}
\end{algorithm}

\alglanguage{pseudocode}
\begin{algorithm}[H]
	\caption{Procedure $\Candidate(u,M)$, run by candidate $u$ on receiving a reply $M$ regarding its candidacy.}
	\label{alg:candidate}
	
	\begin{algorithmic}[1]
	\If{$\CANDSTATE = \Candidate$}
	    \If{$M = \langle \ID_u,\ID_r, \DECLINED\rangle$ (* Where $r$ is presumably a referee *)}
	        \State $\CANDSTATE \gets \nonelectedSTATE$
            \State Add $\langle \ID_u, \LOSES\rangle$ to $\SENDLIST(e)$ for every edge $e$ adjacent to $u$  
	    \ElsIf{$M = \langle \ID_u, \ID_r, \APPROVED \rangle$ (* Where $r$ is presumably a referee *)}
	        \State $\NUMREPLIES \gets \NUMREPLIES + 1$
	        \If {$\NUMREPLIES = 900 \log n$ (* $u$ has received approvals from $900 \log n$ referees *)}
	            \State $\CANDSTATE \gets \electedSTATE$
                \State Add $\langle \ID_u,\LEADER\rangle$ to $\SENDLIST(e)$ for every edge $e$ adjacent to $u$ 
	            \State Terminate
	        \EndIf
	    \EndIf
	\EndIf
	\end{algorithmic}
\end{algorithm}

\alglanguage{pseudocode}
\begin{algorithm}[H]
	\caption{Procedure $\Referee(r,M)$, run by referee $r$ upon receiving a message $M$.}
	\label{alg:referee}
	
	\begin{algorithmic}[1]
    \If {$M = \langle \ID_u, \REQUEST \rangle$ (* $r$ receives a candidacy request from a candidate $u$ *)}
        \State Invoke Procedure $\RefereeRequestResponse(r)$
    \ElsIf {$M = \langle \ID_v, \LOSES \rangle$ (* $r$ receives the result of a dispute between $v$ and another candidate *)}
        \If{$\REFSTATE= \RefInDisputeSTATE$ AND $\Chosen = \ID_v$ 
        (* $r$ was waiting for such a message *)}
        \State Invoke Procedure $\RefereeDisputeResponse(r)$
        \EndIf
    \EndIf
	
	\end{algorithmic}
\end{algorithm}

\alglanguage{pseudocode}
\begin{algorithm}[H]
	\caption{Procedure $\CandidateDecideResponse(v,M)$, run by a node $v$ upon receiving the message $M=\langle \ID_v, \ID_u, \DECIDE\rangle $.}
	\label{alg:candidate-decide-response}
	
	\begin{algorithmic}[1]

    \If{$\CANDSTATE = \candidateSTATE$}
	    \State Set $\CANDSTATE \gets \nonelectedSTATE$
        \State Add $\langle \ID_v, \LOSES\rangle$ to $\SENDLIST(e)$ for every edge $e$ adjacent to $v$ 
	\EndIf
     (* Note that $v$ {\em wins} the dispute with a stronger candidate $u$ only if it has already declared itself leader, hence in this case no further response is necessary. *)
	\end{algorithmic}
\end{algorithm}

\alglanguage{pseudocode}
\begin{algorithm}[H]
	\caption{Procedure $\RefereeRequestResponse(r,M)$, run by referee $r$ on receiving a message $M=\langle \ID_u, \REQUEST\rangle$ originating from candidate $u$.}
	\label{alg:referee-request-response}

\begin{algorithmic}[1]
\If{$\REFSTATE=\RefReadySTATE$ (* $u$ is the first candidate to approach $r$ *)}
	\State Set $\Chosen\gets \ID_u$
    \State Add $\langle \ID_u, \ID_r, \APPROVED \rangle$ to $\SENDLIST(e)$ for every edge $e$ adjacent to $r$ 
	\State Set $\REFSTATE \gets \RefChosenSelectedSTATE$
\ElsIf{$\REFSTATE=\RefChosenSelectedSTATE$ (* There is already another chosen candidate $v$; all other candidates that approached $r$ so far were declined *)}
    \If{$\ID_u < \ID_v$}
        \State Add $\langle \ID_u, \ID_r, \DECLINED \rangle$ to $\SENDLIST(e)$ for every edge $e$ adjacent to $r$ 
    \ElsIf{$r$'s $\MLIST$ contains 
    $\langle \ID_v, \LOSES \rangle$    (* Some other referee generated a broadcast of the message $\langle \ID_v, \ID_u, \DECIDE \rangle$ and node $v$ subsequently broadcast that it lost its candidacy *)}
        \State Set $\Chosen \gets \ID_u$
        \State Add $\langle \ID_u, \ID_r, \APPROVED \rangle$ to $\SENDLIST(e)$ for every edge $e$ adjacent to $r$ 
    \ElsIf{$r$'s $\MLIST$ contains $\langle \ID_v, \ID_u, \DECIDE \rangle$ (* Some other referee generated a broadcast of the 
    dispute message *)}
        \State Set $\Contender \gets \ID_u$
        \State Set $\REFSTATE \gets \RefInDisputeSTATE$
    \Else
        \State Set $\Contender \gets \ID_u$
        \State Add $\langle\ID_v,\ID_u,\DECIDE\rangle$ to $\SENDLIST(e)$ for every edge $e$ adjacent to $r$ 
        \State Set $\REFSTATE \gets \RefInDisputeSTATE$
    \EndIf
\ElsIf{$\REFSTATE=\RefInDisputeSTATE$ (* A dispute is in progress between the current chosen $v$ and the current contender $w$, $\ID_w > \ID_v$ *)}
	\If{$\ID_u < \ID_w$}
        \State Add $\langle \ID_u, \ID_r, \DECLINED \rangle$ to $\SENDLIST(e)$ for every edge $e$ adjacent to $r$ 
	\Else
        \State Add $\langle \ID_w, \ID_r, \DECLINED \rangle$ to $\SENDLIST(e)$ for every edge $e$ adjacent to $r$
	    \State Set $\Contender \gets \ID_u$
        \State Add $\langle \ID_v, \ID_u, \DECIDE \rangle$ to $\SENDLIST(e)$ for every edge $e$ adjacent to $r$ 
	\EndIf
\EndIf
\end{algorithmic}
\end{algorithm}

\alglanguage{pseudocode}
\begin{algorithm}[H]
	\caption{Procedure $\RefereeDisputeResponse(r)$, run by a referee $r$ on receiving a reply to a 
	dispute about its chosen $v$ when it has a contender $u$.}
	\label{alg:referee-dispute-response}
	
\begin{algorithmic}[1]
\State Set $\Chosen \gets \ID_u$
\State Set $\Contender \gets -1$
\State Set $\REFSTATE \gets \RefChosenSelectedSTATE$
\State Add $\langle \ID_u, \ID_r, \APPROVED\rangle$ to $\SENDLIST(e)$ for every edge $e$ adjacent to $r$ 
\end{algorithmic}
\end{algorithm}

\end{document}